\pdfoutput=1
\RequirePackage{ifpdf}
\ifpdf 
\documentclass[pdftex]{sigma}
\else
\documentclass{sigma}
\fi

\numberwithin{equation}{section}

\newtheorem{Theorem}{Theorem}[section]
\newtheorem{Proposition}[Theorem]{Proposition}

\def\bO{\bar 0}
\def\b1{\bar 1}
\def\lieg{\mathfrak{g}}

\begin{document}

\newcommand{\arXivNumber}{2009.00393}

\renewcommand{\thefootnote}{}

\renewcommand{\PaperNumber}{007}

\FirstPageHeading

\ShortArticleName{Harmonic Analysis in $d$-Dimensional Superconformal Field Theory}

\ArticleName{Harmonic Analysis in $\boldsymbol{d}$-Dimensional Superconformal\\ Field Theory\footnote{This paper is a~contribution to the Special Issue on Representation Theory and Integrable Systems in honor of Vitaly Tarasov on the 60th birthday and Alexander Varchenko on the 70th birthday. The full collection is available at \href{https://www.emis.de/journals/SIGMA/Tarasov-Varchenko.html}{https://www.emis.de/journals/SIGMA/Tarasov-Varchenko.html}}}

\Author{Ilija BURI\'C}

\AuthorNameForHeading{I.~Buri\'c}

\Address{DESY, Notkestra{\ss}e 85, D-22607 Hamburg, Germany}
\Email{\href{mailto:ilija.buric@desy.de}{ilija.buric@desy.de}}

\ArticleDates{Received September 02, 2020, in final form January 15, 2021; Published online January 25, 2021}

\Abstract{Superconformal blocks and crossing symmetry equations are among central ingredients in any superconformal field theory. We review the approach to these objects rooted in harmonic analysis on the superconformal group that was put forward in [\textit{J.~High Energy Phys.} \textbf{2020} (2020), no.~1, 159, 40~pages, arXiv:1904.04852] and [\textit{J.~High Energy Phys.} \textbf{2020} (2020), no.~10, 147, 44~pages, arXiv:2005.13547]. After lifting conformal four-point functions to functions on the superconformal group, we explain how to obtain compact expressions for crossing constraints and Casimir equations. The later allow to write superconformal blocks as finite sums of spinning bosonic blocks.}

\Keywords{conformal blocks; crossing equations; Calogero--Sutherland models}

\Classification{81R05; 81R12}


\renewcommand{\thefootnote}{\arabic{footnote}}
\setcounter{footnote}{0}

\section{Introduction}

Conformal field theories (CFTs) are a class of quantum field theories that are interesting for several reasons. On the one hand, they describe the critical behaviour of statistical mechanics systems such as the Ising model. Indeed, the identification of two-dimensional statistical systems with CFT minimal models, first suggested in~\cite{Belavin:1984vu}, was a celebrated early achievement in the field. For similar reasons, conformal theories classify universality classes of quantum field theories in the Wilsonian renormalisation group paradigm. On the other hand, CFTs also play a role in the description of physical systems that do not posses scale invariance, through certain ``dualities''. The most prominent of these is the conjectured AdS/CFT correspondence, according to which conformal field theories should be related to quantum theories of gravity.

An attractive feature of CFTs is that there exists a variety of non-perturbative methods for their study. This is especially true for theories with supersymmetry. Techniques that are being used include integrability, holography, chiral algebras, superconformal index calculations, supersymmetric localisation and the conformal bootstrap.

The last approach, one that we will be concerned with, relies on an axiomatisation of conformal theories that is based on two main assumptions. Firstly, the Hilbert space $\mathcal{H}$ of the theory is assumed to carry a unitary representation of the group $G$ of conformal transformations. Secondly, the space $\mathcal{H}$ is equipped with an algebraic structure called the operator product expansion (OPE). Roughly speaking, the OPE makes $\mathcal{H}$ into an algebra.

The decomposition of the Hilbert space into irreducible representations of $G$ and the structure constants of the operator product algebra define CFT data. The CFT data completely determines the theory, in the sense that it allows for the computation of all correlation functions. In the conformal bootstrap, one tries to constrain the CFT data from self-consistency and some basic physical requirements. The most constraining consistency conditions are the ones implied by the associativity of the operator product algebra. Usually, this condition is formulated as a~property of four-point correlation functions called the crossing symmetry. Therefore, from the mathematical perspective, the bootstrap is a classification programme for solutions of crossing symmetry equations. It was formulated in the 70s by three groups \cite{Ferrara:1973yt,Mack:1976pa,Polyakov:1974gs}, but little progress in dimensions higher than two was made until 2008 when it was realised in~\cite{Rattazzi:2008pe} that crossing equations can be efficiently studied numerically. Since then, there have been many significant advances, both in numerical and in analytical studies of bootstrap equations, exemplified by the precise determination of critical exponents in three-dimensional Ising model, \cite{ElShowk:2012ht,El-Showk:2014dwa,Kos:2016ysd}. Still, no exact solutions beyond the free theory are known in dimensions higher than two.

One among promising ideas in the bootstrap endeavour came in $\cite{Isachenkov:2016gim}$, where the authors realised that conformal partial waves, which capture the contribution of a single irreducible conformal representation to the correlation function, can in some cases be identified with wavefunctions of an integrable Schr\"odinger problem of Calogero--Sutherland type. This was explained in~\cite{Schomerus:2016epl} through harmonic analysis on the conformal group. It is the harmonic analysis approach to conformal theories that we will adopt in the present work.

In Section~\ref{section2}, we will review the construction of crossing equations, illustrating it on a simple example. The two ingredients entering the equations will be defined. These are conformal blocks and the so called crossing factors. We shall follow the influential approach of Dolan and Osborn who characterised the blocks as solutions to a set of Casimir differential equations, \cite{Dolan:2000ut,Dolan:2003hv}. We will then comment on the extent to which both conformal blocks and crossing factors are known (they depend on spacetime dimension, amount of supersymmetry and type of fields entering the four-point function). This material is very well known and is included in order to formulate clearly the problem that we wish to address in the rest.

Sections~\ref{section3}--\ref{section5} are devoted to recent constructions of \cite{Buric:2019rms,Buric:2020buk}. Section~\ref{section3} describes how superconformal four-point functions can be put in a correspondence with certain covariant functions on the superconformal group, termed the $K$-spherical functions. In order to achieve this, we will lift the fields of the theory to functions on the group.

The gain of this initial step lies in the fact that $K$-spherical functions can be studied by established methods of group theory. In Section~\ref{section4}, we shall use the Cartan decomposition of the superconformal group to construct the crossing symmetry equations. Before \cite{Buric:2020buk}, these equations have been constructed only in a limited number of cases. In Section~\ref{section5}, we turn to superconformal blocks. They will be characterised as eigenfunctions of the Laplace--Beltrami operator within the space of $K$-spherical functions. In bosonic theories,\footnote{Throughout the text we will use the word ``bosonic'' to refer to theories which do not posses supersymmetry. These theories can have both bosonic and fermionic fields.} this eigenvalue problem assumes the form of a matrix-valued Schr\"odinger equation that generalises the $BC_2$ Calogero--Sutherland system. This allows to express the blocks in many cases in terms of known special functions. In the supersymmetric setup, partial waves will be obtained from the bosonic ones through a~quantum mechanical perturbation theory that becomes exact at a~small finite order.

A recurring theme will be the fact that the group theoretic approach, among other benefits, allows to treat bosonic and supersymmetric conformal theories in a very similar manner. The latter are necessarily more involved in the kinematical aspects considered here, but we will see that the additional complications appear in a controlled way and can be systematically dealt with. On the other hand, once these difficulties are overcome, supersymmetric theories are certainly attractive to study due to the number of non-perturbative methods available for them that were mentioned above.

The work described here is a first step towards the study of crossing equations, only enabling for their formulation in a large variety of cases. Some remarkable progress in understanding the structure of solutions, and thereby conformal field theories, was made in recent years by considerations of the simplest correlator of identical scalar fields $\langle\varphi\varphi\varphi\varphi\rangle$, \cite{Alday:2016njk,Caron-Huot:2017vep,Fitzpatrick:2012yx, Komargodski:2012ek}. It should be possible to obtain much more information by studying crossing for various types of fields. In the concluding Section~\ref{section6}, we will comment on the future developments in this direction that we hope our results may lead to.

\section{Crossing symmetry equations}\label{section2}

This section may be understood as an extended introduction, where, to fix the ideas, we consider the problem that we want to address in the simplest setup. After reviewing its solution, we will go on to formulate what kind of generalisations will be considered in later sections.

To this end, let us consider a bosonic conformal field theory on $M = \mathbb{R}^d\cup\{\infty\}$. A correlation function $G_n(x_i)=\langle\mathcal{O}_1(x_1)\cdots \mathcal{O}_n(x_n)\rangle$ of local primary fields is a vector valued function $M^n\xrightarrow{} V=V_1\otimes\cdots \otimes V_n$ from $n$ copies of spacetime $M$ into a tensor product of finite dimensional vector spaces $V_i$.\footnote{Correlation functions are not defined when some of the insertion points coincide. We will assume throughout the text that the points are in general position where this is not the case.} The spaces~$V_i$ carry representations~$\rho_i$ of a subgroup $K = {\rm SO}(d)\times {\rm SO}(1,1)$ of the conformal group $G={\rm SO}(d+1,1)$, generated by rotations and dilations. Each representation~$\rho_i$ is specified by a conformal weight~$\Delta_i$ and a highest weight $\lambda_i$ for ${\rm SO}(d)$~-- the spin of the field. By conformal invariance, $G_n(x_i)$ satisfies a set of covariance conditions known as the Ward identities. This allows one to write it in the form
\begin{equation}\label{correlator}
 G_n(x_i) = \Omega(x_i) F(u_a),
\end{equation}
where $u_a$ are conformal invariants (cross ratios) constructed from points $x_i$ and $F$ is an arbitrary function that takes values in a vector space $W$ of dimension less than or equal to that of~$V$. The prefactor~$\Omega(x_i)$ ensures the correct behaviour under conformal transformations $x_i\mapsto g x_i$. Usually, $W$ is called the space of tensor structures and~$\Omega$ the tensor factor. The decomposition~(\ref{correlator}) is not unique, since one can redefine~$\Omega$ by multiplying it by an arbitrary function of cross ratios~$u_a$.

For the four-point function of identical scalars $\varphi$ with the conformal weight $\Delta_\varphi$ the usual choice is
\begin{equation}\label{4pt-function}
 G_4(x_i) = \frac{1}{x_{12}^{2\Delta_\varphi} x_{34}^{2\Delta_\varphi}} F(u,v),\qquad u = \frac{x_{12}^2 x_{34}^2}{x_{13}^2 x_{24}^2},\qquad v = \frac{x_{14}^2 x_{23}^2}{x_{12}^2 x_{34}^2} .
\end{equation}
Here, we use the notation $x_{ij} = x_i - x_j$. The number of conformal invariants in this case is two because, starting from four points $x_i$ in general position, one can use conformal transformations to map them to
\begin{equation}\label{points-choice}
 x_1\mapsto0,\qquad x_2\mapsto \frac{z_1+z_2}{2} e_1 + \frac{z_1 - z_2}{2i} e_2,\qquad x_3\mapsto e_1,\qquad x_4\mapsto\infty,
\end{equation}
where $\{e_i\}$ is a standard orthonormal basis of $\mathbb{R}^d$. The coordinates $(z_1,z_2)$ of the point $x_2$ are then related to the cross ratios $u$, $v$ through
\begin{equation*}
 z_1 z_2 = u,\qquad (1-z_1)(1-z_2) = v .
\end{equation*}
The conformal Lie algebra $\mathfrak{g} = \mathfrak{so}(d+1,1)$ is represented on the space of scalar fields on $M$ through differential operators
\begin{gather*}
 D = x^\mu\partial_\mu + \Delta_\varphi,\qquad P_\mu = \partial_\mu,\qquad M_{\mu\nu} = x_\nu \partial_\mu - x_\mu\partial_\nu,\\ K_\mu = - x^2\partial_\mu + 2 x_\mu x^\nu\partial_\nu + 2x_\mu \Delta_\varphi .
\end{gather*}
We shall put an additional index $i$ on the operators to mean that $x$ in the above formulas is the variable $x_i$ and partial derivatives are with respect to $x_i^\mu$. Furthermore, it is convenient to introduce another basis $\{L_{\alpha\beta}\}$ for $\mathfrak{g}$ by
\begin{gather*}
 L_{0\mu} = \tfrac12(P_\mu - K_\mu),\qquad L_{1\mu} = \tfrac12 (P_\mu + K_\mu),\qquad L_{01} = D,\\ L_{\mu\nu} = M_{\mu\nu}, \qquad \mu,\nu=2,\dots,d+1 .
\end{gather*}
The quadratic Casimir $C_2 = L_{\alpha\beta} L^{\alpha\beta}$ is a second order differential operator in the above representation. Let $4C^{(12)}_2 = -\big(L^1_{\alpha\beta}+L^2_{\alpha\beta}\big)\big(L^{1,\alpha\beta}+L^{2,\alpha\beta}\big)$. A conformal partial wave captures the contribution to the correlation function of one conformal family present in the OPE. As noticed by Dolan and Osborn, \cite{Dolan:2000ut}, waves may be characterised as eigenfunctions of~$C_2^{(12)}$. When acting on $G_4(x_i)$, the operator $C_2^{(12)}$ produces a function of the same product form~(\ref{4pt-function}). Therefore, the operator $\Delta_2=\Omega(x_i)^{-1} C_2^{(12)} \Omega(x_i)$ can be written as a differential operator in the cross ratios, called the Casimir differential operator. It is in fact simpler to write $\Delta_2$ in variables $z_i$, where it can be shown to take the form
\begin{equation*}
 \Delta_2 = D_{z_1} + D_{z_2} + (d-2)\frac{z_1 z_2}{z_1 - z_2}((1-z_1)\partial_{z_1} - (1-z_2)\partial_{z_2}) .
\end{equation*}
Here, the operator $D_x$ reads
\begin{equation*}
 D_x = x^2 (1-x) \partial^2_x - x^2\partial_x .
\end{equation*}
Therefore, in $d=2$ dimensions, the Casimir eigenvalue equations split into independent hypergeometric equations in~$z_1$ and $z_2$. With some additional work, it is possible to decouple the equations for any even~$d$. Conformal blocks $g_{\Delta,l}$ are eigenfunctions of $\Delta_2$ with eigenvalues $2\Delta(\Delta-d)+2l(l+d-2)$. In two and four dimensions they read
\begin{gather*}
  g_{\Delta,l}^{(2d)} = k_{\Delta+l}(z_1) k_{\Delta-l}(z_2) + k_{\Delta-l}(z_1) k_{\Delta+l}(z_2),\\
 g_{\Delta,l}^{(4d)} = \frac{z_1 z_2}{z_1 - z_2} \big( k_{\Delta+l}(z_1) k_{\Delta-l-2}(z_2) - k_{\Delta-l-2}(z_1) k_{\Delta+l}(z_2)\big),
\end{gather*}
where $k$ is given in terms of the hypergeometric function by
\begin{equation*}
 k_{2a}(x) = x^{a}\,{} _2F_1(a,a;2a;x) .
\end{equation*}
The correlation function can be expanded in conformal blocks. The coefficients in the expansion can be seen to be squares of the OPE coefficients, and are therefore positive real numbers. We denote them by $p_{\Delta,l}$ and write
\begin{equation*}
 G_4(x_i) = \frac{1}{x_{12}^{2\Delta_\varphi}x_{34}^{2\Delta_\varphi}} \sum_{\Delta,l} p_{\Delta,l} g_{\Delta,l}(u,v) .
\end{equation*}
Another property of correlators, which follows from Euclidean quantum field theory axioms, is invariance under permutations of the arguments $x_i$. Whereas such a condition may seem rather innocent, when combined with the decomposition~(\ref{correlator}), it leads to a non-trivial functional equation for $F$
\begin{equation}\label{crossing-eqn-scalars}
 \Omega(x_i) F(u_a) = \Omega(x_{\sigma(i)}) F(u'_a) .
\end{equation}
Here, $\sigma$ is any permutation in $S_4$ and $u'_a$ stand for cross ratios constructed out of permuted points. Any particular permutation is referred to as a {\it channel}. Taking $\sigma = (2 4)$ and $\sigma = (3 4)$ in the example above gives
\begin{equation*}
 F(u,v) = \left( \frac{u}{v} \right)^{\Delta_\varphi} F(v,u),\qquad F(u,v) = F(u/v,1/v).
\end{equation*}
The idea of conformal bootstrap is to substitute the conformal block decomposition for $F$ on both sides of these equations and try to find solutions with positive coefficients $p_{\Delta,l}$. The second of these equations is actually satisfied by each block of even spin, while for odd spins the two sides differ by a sign. Therefore, we learn that only operators of even spin appear in the decomposition of $G_4(x_i)$. The first equation, however, leads to a much more non-trivial condition on the $p_{\Delta,l}$
\begin{equation*}
 v^{\Delta_\varphi} \sum_{\Delta,l} p_{\Delta,l} g_{\Delta,l}(u,v) = u^{\Delta_\varphi} \sum_{\Delta,l} p_{\Delta,l} g_{\Delta,l}(v,u).
\end{equation*}
This is the basic construction that we want to carry out in more complicated situations. By that we mean that the fields in the correlation function will be allowed to carry arbitrary spins (in general, different from each other). In such a case the above procedure meets some difficulties. Firstly, although conformal blocks are still characterised as solutions to appropriate Casimir equations, it is often hard to identify them in the world of special functions. For bosonic theories, there exist efficient algorithmic procedures for computation of such spinning blocks, \cite{Echeverri:2015rwa,Echeverri:2016dun,Costa:2016hju,Costa:2011dw,Erramilli:2019njx,Fortin:2020ncr,Karateev:2017jgd, Penedones:2015aga,SimmonsDuffin:2012uy}. One systematic and efficient method is to start from scalar blocks and apply to them a set of differential operators known as weight-shifting operators (see also \cite{Dolan:2011dv,Hogervorst:2013sma,Isachenkov:2017qgn} for other investigations of bosonic conformal blocks). It is probably fair to say that the theory of superconformal blocks is considerably less developed than its bosonic counterpart.

Concerning the crossing symmetry equations, for spinning fields the factor $\Omega(x_i)$ becomes a~$n\times m$ matrix with $n = \dim V$ and $m=\dim W$. The ratio of two prefactors that correspond to different permutations $\sigma$ is replaced by an $m\times m$ matrix $M$ whose entries depend on $x_i$ only through cross ratios. This {\it crossing factor} $M$ have been derived in several cases, usually for low spins, in \cite{Costa:2011mg,Cuomo:2017wme,Dymarsky:2017yzx,Dymarsky:2017xzb,Kravchuk:2016qvl, Osborn:1993cr}.

In this work, we shall review two recent constructions, \cite{Buric:2019rms,Buric:2020buk}, that address the above two questions in turn. Our starting point will be to regard the space $M$ as a coset of the conformal group by a certain parabolic subgroup $P$. Any function on $M$ thus can be lifted to a function on the group that is (left) covariant with respect to $P$. As we will see, the lifted function is then nothing but a vector in a principal series representation of $G$.

\looseness=-1 In this way, the four point function is lifted to a function $F_4\colon G^4\xrightarrow{} V$ that may be regarded as a vector in the tensor product of four principal series representations. Ward identities satisfied by $G_4$ mean that $F_4$ is an invariant vector. Such invariant vectors will be shown to give rise to functions $F\colon G\xrightarrow{} V$, covariant under both the left and the right action of a group $K\subset G$. Whereas the group $P$ is the stabiliser of one point in~$M$, the group $K$ is the stabiliser of a~pair of points. A key ingredient from group theory that allows for this transformation is the so-called Cartan decomposition of the conformal group. The functions with covariance laws obeyed by~$F$ will be referred to as $K$-spherical. Thus, our first result can be stated as producing a 1-1 correspondence between solutions of Ward identities and $K$-spherical functions. It is given in equation~(\ref{magic-formula}).

The latter space of functions is somewhat better adopted to group theory than the former. This will allow us to find universal formulas for the crossing factor and Casimir equations. We will make use of the following observation in particular: while the prefactor $\Omega(x_i)$ is not conformally invariant, the ratio of prefactors in two different channels is. This crossing factor is thus defined initially as a function of $4d$ variables in $d$ spacetime dimensions, but only depends on them through two cross ratios. In fact the crossing factor is a much simpler object than $\Omega(x_i)$ and we will provide compact formulas for arbitrary spinning fields in Section~\ref{section4}. For simplicity, we described the constructions assuming that $G$ is a bosonic conformal group, but they will be carried for an arbitrary superconformal group. In this case, $M$ is the corresponding superspace.

As mentioned in the introduction, we will derive Casimir equations in Section~\ref{section5}. Their solutions will be constructed as finite sums of bosonic spinning conformal blocks. The latter, as noted above, have been the subject of many recent investigations and are well understood.

Throughout the paper, all constructions will be illustrated on the example of $\mathcal{N}=2$ superconformal symmetry in one dimension. The Lie superalgebra of this symmetry is $\mathfrak{sl}(2|1)$.

\section{Lift of correlation functions to the group}\label{section3}

In this section, we shall establish a correspondence between four-point functions in a superconformal field theory and certain covariant functions on the superconformal group that may be called $K$-spherical functions. The first subsection introduces the Weyl inversion~-- an element $w$ of the bosonic part of the superconformal group, which is closely related to the usual conformal inversion, but has the advantage of being well-defined for an arbitrary superconformal group. Next, the Bruhat decomposition of ${\rm Spin}(d+1,1)$ is review and its super-cousin defined.

These ingredients are used in the final subsection to map solutions of Ward identities satisfied by a four-point function to vector-valued covariant functions on $G$. Our starting point is the observation that the Ward identities can be written as (\ref{eq:G4Wardid}) using the Bruhat factors. Then a~$K$-spherical function $F$ is produced from a~solution~$G_4(x_i)$ in equations (\ref{eq:F4rightcov}) and (\ref{eq:FfromF4}). The space of $K$-spherical functions is defined in~(\ref{eq:covariance}). Finally, we show how to invert the process and recover $G_4$ from $F$ in equation~(\ref{magic-formula}), which is the main result of the section.

Applications of the kinematical transformation~(\ref{magic-formula}) will be treated in the following two sections. Our approach in this section, and manipulations with the Bruhat decomposition in particular, draw on ideas from~\cite{Dobrev:1977qv}.

\subsection{Weyl inversion}\label{section3.1}

When considering constraints that conformal invariance imposes on correlation functions of a~quantum field theory, an important role is played by the conformal inversion
\begin{equation*}
 I x^\mu = \frac{x^\mu}{x^2}.
\end{equation*}
This is because the conformal group is generated by translations, rotations, dilations and the inversion. Thus, often to prove some statement about all group elements, it is sufficient to show it for these four types of transformations. The first three types act linearly on spacetime and are rather simple to treat. However, the action of special conformal transformations is non-linear and it is often easier to consider~$I$ instead.

In bosonic Euclidean conformal field theory by {\it conformal group} one can mean various Lie groups that have $\mathfrak{so}(d+1,1)$ as their Lie algebra. We will assume minimal symmetry and take the bosonic conformal group $G_{\rm bos}$ to be connected and simply connected, denoted also ${\rm Spin}(d+1,1)$. Let ${\rm O}(d+1,1)$ be the group of pseudo-orthogonal matrices. We shall denote its identity component by ${\rm SO}^+(d+1,1)$. This group can be realised as the quotient of $G_{\rm bos}$ by its centre
\begin{equation*}
 {\rm SO}^+(d+1,1) = {\rm Spin}(d+1,1)/\mathbb{Z}_2 .
\end{equation*}
In particular, the inversion $I$, which is an element of ${\rm O}(d+1,1)$ not connected to the identity, is not assumed to be part of the symmetry. On the other hand, the Weyl inversion, obtained by composing $I$ with the reflection in the hyperplane orthogonal to the unit vector $e_d$, $w=s_{e_d}\circ I$, belongs to ${\rm SO}^+(d+1,1)$. It can be equivalently defined as
\begin{equation}
 w = {\rm e}^{\pi\frac{K_d-P_d}{2}} . \label{Weyl-inversion}
\end{equation}
Unlike that of ${\rm SO}^+(d+1,1)$, the action of ${\rm Spin}(d+1,1)$ on the compactified Euclidean space is not faithful, as both elements of the centre act trivially. There are two elements of ${\rm Spin}(d+1,1)$ that project to $w$ in ${\rm SO}^+(d+1,1)$. We will use the expression $(\ref{Weyl-inversion})$ as the definition of the Weyl inversion for ${\rm Spin}(d+1,1)$. Then one can check that its square is the non-trivial element of the centre, $w^2=-1$.

Now let $G$ be a superconformal group. The underlying Lie group $G_{(0)}$ of $G$ has the form
\begin{equation*}
 G_{(0)} = G_{\rm bos}\times U,
\end{equation*}
where $G_{\rm bos}$ has the Lie algebra $\mathfrak{so}(d+1,1)$ and $U$ is some Lie group describing internal symmetries of the theory. The odd part of the Lie superalgebra $\mathfrak{g} = {\rm Lie}(G)$ carries the adjoint representation of $G_{(0)}$. Furthermore, this representation decomposes into a direct sum of spinor representations under $\mathfrak{so}(d)\subset\mathfrak{g}_{(0)}$. It follows that $G_{\rm bos}$ has to be simply connected. It is for this reason that we opted to work with the simply connected group above.

For superconformal groups, we define the Weyl inversion as $w = (w_{\rm bos},e_U)$, where $e_U$ denotes the identity element of $U$. From this definition some general properties readily follow. For example, for any supertranslation generator $Q$ we have
\begin{equation*}
 \tfrac12{\rm Ad}_w(Q) = {\rm Ad}_w([D,Q]) = [{\rm Ad}_w(D),{\rm Ad}_w(Q)] = - [D,{\rm Ad}_w(Q)] .
\end{equation*}
We have used that ${\rm Ad}_{w_{\rm bos}}(D)=-D$. Therefore, the Weyl inversion interchanges generators of supertranslations and super special conformal transformations. For type I superconformal algebras we can use ${\rm Ad}_w(R) = R$ to similarly deduce
\begin{equation*}
 {\rm Ad}_w(\mathfrak{q}_\pm) = \mathfrak{s}_\pm . 
\end{equation*}
For our notation concerning the Lie algebra of the conformal group and the Lie superalgebra of the superconformal group, the reader is referred to Appendix~\ref{appendixA}.

\subsection{Bruhat decomposition}\label{section3.2}

Let us continue to denote by $G$ a superconformal group with the Lie superalgebra $\mathfrak{g} = {\rm Lie}(G)$. The subspace of elements of $\mathfrak{g}$ that have a positive dilation weight is denoted by
\begin{equation*}
 \mathfrak{m} = \mathfrak{g}_{1/2} \oplus \mathfrak{g}_1.
\end{equation*}
It is spanned by translations and supertranslations and forms a subalgebra of $\mathfrak{g}$. The corresponding subgroup $\mathcal{M}$ of $G$ is called the superspace. On the superspace, we introduce the coordinates through
\begin{equation*}
m(x) = {\rm e}^{x_{a} X^a} .
\end{equation*}
By superspace, we in fact mean a supercommutative algebra generated by elements $x^a$, which can be thought of coordinates. Thus, the algebra $\mathcal{M}$ is more appropriately thought of as the algebra of functions on the superspace rather then the space itself. The relevant notions of supergeometry and super Lie theory in particular are sketched in Appendix~\ref{appendixD}.

Let $\mathcal{M}_i$ be super-commuting copies of the superspace, where $i$ belongs to some indexing set, and let $x_i\in\mathcal{M}_i$. Given any pair of labels $i,j$ we
define the variables $x_{ij} = (x_{ija}) \in \mathcal{M}_i \otimes \mathcal{M}_j$ through
\begin{equation} \label{eq:xij}
m(x_{ij}) = m(x_j)^{-1} m(x_i) .
\end{equation}
This expression is well-defined as $\mathcal{M}$ is itself a supergroup. Concrete expressions for the components of $x_{ij}$ can be worked out from the anti-commutation relations of the supercharges $Q$.

In the last section we introduced the Weyl element $w$ through equation~\eqref{Weyl-inversion}. With the help of it, let us define a new family of supergroup elements $n$ through
\begin{equation} \label{eq:nx}
n(x) = w^{-1} m(x) w .
\end{equation}
Since $m$ involves only generators $X^a \in\mathfrak{m}$ of the superconformal algebra that raise the conformal weight, the element $n$ is built using generators $Y^a$ from the algebra $\mathfrak{n}=\lieg_{<0}$ that lower the conformal weight~-- special conformal generators $K$ and their fermionic cousins $S$.

For the bosonic conformal group, the {\it Bruhat decomposition} is a factorisation of a conformal transformation into a product of a translation, a rotation, a dilation and a special conformal transformation. It will be convenient for us to put the dilation and rotation pieces into a single factor and write $g = m(g) n(g) k(g)$.\footnote{Almost all conformal transformations admit a Bruhat decomposition. That is, the set of elements that cannot be decomposed in this way has Haar measure zero.} The corresponding decomposition of the Lie algebra reads
\begin{equation*}
 \mathfrak{g} = \mathfrak{m} \oplus \mathfrak{n} \oplus \mathfrak{k},
\end{equation*}
with $\mathfrak{k} = {\rm Lie}(K)$. The latter is a valid decomposition for any superconformal algebra. By exponentiation, it gives a decomposition of the superconformal group that we shall refer to, by a slight abuse of terminology, as the Bruhat decomposition. For an arbitrary $h\in G$ we define the functions $y(x,h)$, $z(x,h)$ and $t(x,h)$ through the factorisation\footnote{That is, if $g = h m(x)$, we define $y=y(x,h)$ and $z=z(x,h)$ by
\begin{equation*}
 m(g) = {\rm e}^{y_a X^a}, \qquad n(g) = w^{-1} {\rm e}^{z_a X^a} w,
\end{equation*}
and $t(x,h)$ by $k(g) = k(t(x,h))$ with some arbitrary coordinate system $(t^\alpha)$ on~$K$.}
\begin{equation}
 h m(x) = m(y(x,h)) n(z(x,h)) k(t(x,h)) . \label{matrix-identity}
\end{equation}
When $h=w$ we simply write $y(x)=y(x,w)$ and further $y_{ij} = y(x_{ij})$. Similarly are defined~$z(x)$, $t(x)$, $z_{ij}$ and $t_{ij}$. We have by definition
\begin{equation*} 
w m(x_{ij}) = m(y_{ij})  n(z_{ij})  k(t_{ij}).
\end{equation*}
We will regard the functions $y(x)$, $z(x)$, $t(x)$ as known. For the bosonic conformal group, they can be found in~\cite{Dobrev:1977qv}. In the case of superconformal groups, they are easily determined by manipulations with supermatrices, as shall be outlined in an example below.

\subsection{From quantum fields to functions on the group}\label{section3.3}

Fields in a superconformal theory are organised according to representations of $G$. Their transformation properties are encoded in a finite-dimensional representation of the subgroup $K = {\rm SO}(1,1)\times {\rm Spin}(d)\times U$ of dilations, rotations and internal symmetries. For bosonic theories, these labels amount to the field's conformal dimension and its spin. In the supersymmetric case, there are additional labels due to the internal symmetry group, which we collectively call $R$-charges.

Primary fields can be naturally associated with principal series representations of $G$. To make this point clear, let us focus on the bosonic theory. Principal series representations of $G={\rm SO}(d+1,1)$ can be realised on spaces of vector valued functions on the group, covariant under the (say right) regular action of the parabolic subgroup $P$ -- see Appendix~\ref{appendixA} for details. Such functions are uniquely determined by the values that they assume on~$M$, regarded as the subgroup of translations inside $G$. Under this identification, the left regular action of the Lie algebra~$\mathfrak{g}$ reads
\begin{gather*}
 p_\mu = \partial_\mu,\qquad m_{\mu\nu} = x_\nu \partial_\mu - x_\mu\partial_\nu + S_{\mu\nu},\qquad d = x^\mu\partial_\mu + \Delta_\varphi,\\ k_\mu = x^2\partial_\mu - 2 x_\mu d + 2x^\nu S_{\mu\nu} .
\end{gather*}
These are precisely the differential operators that appear in the Ward identities. In fact, we can write the Ward identities in an alternative form as follows. Let $h\in G$ be some (global) conformal transformation and $G_n$ an $n$-point correlation function of primary fields. Then
\begin{equation} \label{eq:G4Wardid}
 G_n (hx_i) = \bigg(\bigotimes_{i=1}^n \rho_i (k(t(x_i,h)))\bigg) G_n(x_i).
\end{equation}
Here $\rho_i$ are the representations of $K$ associated to fields appearing in the correlation function and $k(t(x_i,h))$ the Bruhat factors defined in the previous subsection. We will be mostly interested in four-point functions and denote the carrier space of $\rho_1\otimes\cdots \otimes\rho_4$ by $V$.

The equation~(\ref{eq:G4Wardid}) is a valid formulation of Ward identities for conformal and superconformal theories alike. Having written the identities in this way, we can state and prove our first result:

\begin{Theorem} There is a $1$-$1$ correspondence between solutions of Ward identities $(\ref{eq:G4Wardid})$ for four-point functions and that of $K$-spherical functions on the superconformal group~$G$. The latter are elements of the algebra $A(G)\otimes V$ which satisfy, for all $k_l,k_r\in K$
\begin{gather}\label{eq:covariance}
 F(k_l g k_r)= \big(\rho_1(k_l)\otimes\rho_2\big(w k_l w^{-1}\big)\otimes\rho_3\big(k_r^{-1}\big)\otimes\rho_4\big(w k_r^{-1}w^{-1}\big)\big) F(g).
\end{gather}
\end{Theorem}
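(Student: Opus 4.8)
The plan is to factor the claimed bijection through an intermediate object: a vector-valued function $F_4\colon G^4\to V$, the \emph{lift} of $G_4$ to the group. Recall that a primary field with $K$-representation $\rho_i$ corresponds to the principal series representation induced from the parabolic $P$, realized on functions $f_i\colon G\to V_i$ with $f_i(gp)=\rho_i(p)^{-1}f_i(g)$ for $p\in P$, where $\rho_i$ is extended to $P$ trivially on $N=\exp\mathfrak{n}$ (legitimate since $[\mathfrak{k},\mathfrak{n}]\subseteq\mathfrak{n}$); such an $f_i$ is determined by its restriction to the superspace $\mathcal{M}$ through $m(x)$. Accordingly, I define $F_4$ by $F_4(g_1p_1,\dots,g_4p_4)=\big(\bigotimes_i\rho_i(p_i)^{-1}\big)F_4(g_1,\dots,g_4)$ for $p_i\in P$, together with $F_4(m(x_1),\dots,m(x_4))=G_4(x_i)$; the Bruhat factorization $g=m(g)n(g)k(g)$ makes this well defined for configurations in general position. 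The first step is to show that $G_4$ satisfies the Ward identities $(\ref{eq:G4Wardid})$ if and only if $F_4$ is invariant under the diagonal left action $F_4(hg_1,\dots,hg_4)=F_4(g_1,\dots,g_4)$ for $h\in G$. This is a direct computation: insert $g_i=m(x_i)$, Bruhat-decompose $h\,m(x_i)=m(y(x_i,h))n(z(x_i,h))k(t(x_i,h))$, and use that $\rho_i$ is trivial on $N$; the invariance condition then collapses precisely to $(\ref{eq:G4Wardid})$ with $hx_i:=y(x_i,h)$.

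Next I pass from $F_4$ on $G^4$ to a function on a single copy of $G$ by a partial gauge fixing using the Weyl inversion $w$. Set $F(g):=F_4(e,\,w,\,g,\,gw)$; I claim $F$ is $K$-spherical. For $k_l\in K$ write $F(k_lg)=F_4(e,w,k_lg,k_lgw)$, apply diagonal left-invariance by $k_l^{-1}$ to obtain $F_4(k_l^{-1},k_l^{-1}w,g,gw)$, and then apply right-$P$-covariance: the first slot $k_l^{-1}=e\cdot k_l^{-1}$ with $k_l^{-1}\in K\subset P$ contributes $\rho_1(k_l^{-1})^{-1}=\rho_1(k_l)$, while the second slot $k_l^{-1}w=w\cdot(w^{-1}k_l^{-1}w)$ with $w^{-1}k_l^{-1}w\in K\subset P$ (because ${\rm Ad}_w$ preserves the zero-weight space $\mathfrak{k}$) contributes $\rho_2(w^{-1}k_l^{-1}w)^{-1}=\rho_2(w^{-1}k_lw)$. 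Since $w^2$ is central, $w^{-1}k_lw=wk_lw^{-1}$, so $F(k_lg)=\big(\rho_1(k_l)\otimes\rho_2(wk_lw^{-1})\otimes 1\otimes 1\big)F(g)$. The case of right multiplication is analogous but uses right-$P$-covariance only in slots $3$ and $4$: $gk_r=g\cdot k_r$ and $gk_rw=gw\cdot(w^{-1}k_rw)$ give $F(gk_r)=\big(1\otimes 1\otimes\rho_3(k_r^{-1})\otimes\rho_4(wk_r^{-1}w^{-1})\big)F(g)$. Left and right multiplications commute, so these combine into the covariance law $(\ref{eq:covariance})$; tracing through the Bruhat decomposition of $w$, $g$ and $gw$ into $m(\cdot)P$ turns $F(g)=F_4(e,w,g,gw)$ into the explicit reconstruction formula $(\ref{magic-formula})$.

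It remains to invert the construction. The geometric input is that a configuration $(g_1,\dots,g_4)$ in general position can be brought, using the left $G$-action and the right $P^4$-action, to the form $(e,w,g,gw)$: left-translating by $g_1^{-1}$ normalizes the first slot to $e$; then $g_1^{-1}g_2$ and $g_3^{-1}g_4$ lie in the big Bruhat cell $PwP$ (indeed $\mathcal{M}\setminus\{e\}\subset PwP$, since $\mathfrak{m}\cap\mathfrak{p}=0$ and the restricted root system is rank one), which lets one choose $p_1,p_2$ and $p_3,p_4$ so that the second and fourth slots become $w$ and $gw$ for a suitable $g$. The residual ambiguity in this normalization is exactly a $K$ acting on the first two slots and a further $K$ acting on the last two (each being a stabilizer $P\cap w^{-1}Pw=K$), and the prescription $F_4(g_1,\dots,g_4):=\big(\bigotimes_i\rho_i(p_i)^{-1}\big)F(g)$ is independent of this ambiguity \emph{precisely} because $F$ obeys $(\ref{eq:covariance})$: the two $K$-covariances of $F$ cancel the corresponding $\rho_i(p_i)$-ambiguities. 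Hence every $K$-spherical $F$ yields a well-defined left-$G$-invariant, right-$P^4$-covariant $F_4$, and restricting to $\mathcal{M}^4$ gives a solution $G_4$ of the Ward identities; one checks this is a two-sided inverse of the forward map, since the lift of $G_4=F_4|_{\mathcal{M}^4}$ is $F_4$ itself.

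The main obstacle is this last step: establishing that the orbit map $g\mapsto(e,w,g,gw)$ sweeps out a dense (general-position) subset of $G^4$ modulo $G\times P^4$, with residual stabilizer exactly $K\times K$. This is the genuinely structural point -- the group-theoretic form of the statements that two superspace points can be fixed to $0$ and $\infty$ with common stabilizer $K$, and that the relative position of a second pair of points is then an element of $K\backslash G/K$, whose dimension matches the number of cross ratios. Everything else -- well-definedness of the lift and the two covariance computations -- is bookkeeping with the Bruhat decomposition and the relation $w^2\in Z(G)$; one should keep in mind that identities hold only in general position (or as identities in $A(G)\otimes V$ in the super setting), which is harmless since the covariance laws propagate them to the whole (super)group.
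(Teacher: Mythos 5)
Your construction of the forward map coincides with the paper's: the lift to a left-invariant, right-$P^4$-covariant $F_4$ on $G^4$ and the specialisation $F(g)=F_4\big(e,w^{\pm1},g,gw^{\pm1}\big)$ are exactly the steps taken in the text (the paper uses $w^{-1}$ where you use $w$; since $w^2$ is central this only rescales $F$ by the constant $\rho_2\big(w^2\big)^{-1}\otimes\rho_4\big(w^2\big)^{-1}$, and the covariance law~(\ref{eq:covariance}) comes out identically). Where you genuinely diverge is the inverse direction. The paper inverts by explicit algebra: starting from $F_4(m(x_1),m(x_2))$ it inserts $m(x_{21})$, applies the factorisation $w\,m(x_{21})=m(y_{21})n(z_{21})k(t_{21})$ together with $w^{-1}m(y)=n(y)w^{-1}$, and thereby lands directly on the lifting formula~(\ref{magic-formula}) with its prefactor $1\otimes\rho_2(k(t_{21}))^{-1}\otimes1\otimes\rho_4(k(t_{43}))^{-1}$ and argument $g(x_i)=n(y_{21})^{-1}m(x_{31})n(y_{43})$. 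You instead argue structurally: the $G\times P^4$ orbit of the set $\{(e,w,g,gw)\}$ is dense in $G^4$, the residual stabiliser is $K\times K$ acting exactly as in~(\ref{eq:covariance}), so a $K$-spherical $F$ extends unambiguously to an invariant $F_4$. Your argument is sound~-- the identification $P\cap wPw^{-1}=K$ and the observation that normalising the second and fourth slots only requires $g_1^{-1}g_2,\,g_3^{-1}g_4\in PwP$ (using the freedom in both adjacent $P$-factors) are correct~-- and it makes the $1$-$1$ correspondence and the role of $K$ as a two-point stabiliser conceptually transparent. What it does not deliver is the explicit formula~(\ref{magic-formula}), i.e., the concrete prefactor and the group element $g(x_i)$, on which everything in Sections~\ref{section4} and~\ref{section5} relies; since the surrounding text presents that formula as the content of the theorem, the computation you summarise as ``tracing through the Bruhat decomposition'' would still need to be written out to match the paper's result in full.
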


\begin{proof} We would first like to show how a solution of Ward identities can be used to produce a~$K$-spherical function. First, any solution $G_4$ can be extended in a unique way to a function~$F_4$ on four copies of~$G$ if we impose
\begin{equation}\label{eq:F4rightcov}
 F_4(m(x_i)) = G_4(x_i) ,\qquad F_4 (g_i n_i k_i) = \bigotimes_{i=1}^{4} \rho_i \big(k_i^{-1}\big)
 F_4(g_i).
\end{equation}
The Ward identities \eqref{eq:G4Wardid} satisfied by $G_4$ imply the following invariance conditions satisfied by~$F_4$ under the diagonal left regular action of $G$
\begin{align}
 F_4(h g_i) &= F_4(h m(x_i) n_i k_i) = F_4\big( m\big(x_i^h\big) n(z(x_i,h)) k(t(x_i,h)) n_i k_i\big) \nonumber\\
 &= \bigg( \bigotimes_{i=1}^4 \rho_i(k_i)^{-1} \bigg)\bigg( \bigotimes_{i=1}^4 \rho_i \big(k(t(x_i,h))^{-1}\big) \bigg) G_4\big(x_i^h\big) \nonumber\\
 & = \bigg( \bigotimes_{i=1}^4 \rho_i(k_i)^{-1} \bigg)G_4(x_i) = F_4(g_i) .\label{F4-left-invariance}
\end{align}
In this short derivation, besides Ward identities, we used the definition~(\ref{matrix-identity}) and covariance laws~(\ref{eq:F4rightcov}). Given $F_4$ and the Weyl inversion $w$ we can construct a new object $F\in\ A(G)\otimes V$ by
\begin{equation}\label{eq:FfromF4}
 F(g) := F_4 \big(e,w^{-1},g,gw^{-1}\big).
\end{equation}
While the motivation for such a map might not be clear, it is readily verified that $F$ is a $K$-spherical function. Indeed, from the definition \eqref{eq:FfromF4} of $F$, the left invariance condition~(\ref{F4-left-invariance}) and the right covariance law in equation~\eqref{eq:F4rightcov} of $F_4$ we obtain
\begin{align*}
 F(k_l g k_r) &= F_4\big(e,w^{-1},k_l g k_r, k_l g k_r w^{-1}\big) =
 F_4\big(k_l^{-1},w^{-1} w k_l^{-1}w^{-1},g k_r, g w^{-1} w k_r w^{-1} \big)\\
 &= \big(\rho_1(k_l)\otimes\rho_2\big(w k_l w^{-1}\big)
 \otimes\rho_3\big(k_r^{-1}\big)\otimes\rho_4\big(wk_r^{-1}w^{-1}\big)\big) F(g).
\end{align*}
We shall now go in the other direction and show how to recover~$G_4$ from~$F$. Suppressing the last two arguments and their corresponding prefactors for simplicity, we have
\begin{align*}
F_4(m(x_1),m(x_2)) & = \big(1 \otimes \rho_2\big(k(t_{21})^{-1}\big)\big)
F_4\big(m(x_1) n(y_{21}), m(x_2) k(t_{21})^{-1} n(z_{21})^{-1}\big) \\
& = \big(1 \otimes \rho_2\big(k(t_{21})^{-1}\big)\big)
F_4\big(m(x_1) n(y_{21}), m(x_1) m(x_{21}) k(t_{21})^{-1} n(z_{21})^{-1}\big) \\
& = \big(1 \otimes \rho_2\big(k(t_{21})^{-1}\big)\big)
F_4\big(m(x_1) n(y_{21}), m(x_1) w^{-1} m(y_{21})\big) \\
& = \big(1 \otimes \rho_2\big(k(t_{21})^{-1}\big)\big)
F_4\big(m(x_1) n(y_{21}), m(x_1) n(y_{21}) w^{-1} \big).
\end{align*}
In the first step we used the right covariance property~\eqref{eq:F4rightcov} of $F_4$ and the fact that the compensating prefactors are trivial on elements of the form $n(x)$. Next, we inserted $m(x_{21})$ using its definition~(\ref{eq:xij}) and applied the formula
\[ m(x_{21}) = w^{-1} m(y_{21}) n(z_{21}) k(t_{21}),\]
which is essentially the definition of $y_{21}$, $z_{21}$ and~$t_{21}$. Finally we commuted the element~$w^{-1}$ past~$m(y_{21})$ by an application of~(\ref{eq:nx}). The same steps can be repeated for the second two arguments to arrive at
\begin{gather*}
F_4(m(x_i)) = \big(1 \otimes \rho_2\big(k(t_{21})^{-1}\big) \otimes 1 \otimes \rho_4\big(k(t_{43})^{-1}\big)\big)\\
\hphantom{F_4(m(x_i)) =}{}\times F_4\big(g_{12}(x_i),g_{12}(x_i) w^{-1}, g_{34}(x_i), g_{34}(x_i) w^{-1}\big),
\end{gather*}
where we introduced the elements
\[ g_{ij}= m(x_i) n(y_{ji}) .\]
To complete the derivation, we use the left invariance property of $F$ proved in~(\ref{F4-left-invariance}), with $h = g_{12}^{-1}$
\[ F_4(m(x_i)) = \big(1 \otimes \rho_2\big(k(t_{21})^{-1}\big)\otimes 1 \otimes \rho_4\big(k(t_{43})^{-1}\big)\big)F_4\big(e, w^{-1}, g(x_i), g(x_i) w^{-1}\big),\]
where the element $g(x_i)$ is defined as
\begin{equation*}
 g(x_i) = g_{12}^{-1} g_{34} = n(y_{21})^{-1} m(x_{31}) n(y_{43}) .
\end{equation*}
Putting everything together, the correlation function $G_4$ is recovered from the corresponding $K$-spherical function $F$ as
\begin{equation}\label{magic-formula}
G_4(x_i) = \big(1\otimes\rho_2(k(t_{21}))^{-1}\otimes1\otimes\rho_4(k(t_{43}))^{-1}\big) F(g(x_i)) .
\end{equation}
This establishes the theorem. The last relation will be referred to as the {\it lifting formula} and it is the main result of this section.
\end{proof}

Let us make a few remarks on the uniqueness of such a lift. As can be seen, the formula consists of two ingredients, the argument $g(x_i)$ of the function on the right-hand side and the prefactor. Also, the space of $K$-spherical functions depends on the quantum numbers of the fields in the correlation function. These different ingredients play somewhat different roles in the lift of the correlator. To explain these, assume for the moment that all fields transform trivially under rotations, dilations and internal symmetries. In this case, the space~(\ref{eq:covariance}) is that of $K$-$K$ {\it invariant} functions and the prefactor in~(\ref{magic-formula}) is trivial. It is the function $g(x_i)$, which does not depend on quantum numbers, that ensures that differential equations of Ward identities are carried to $K$-$K$ invariance laws. In this sense, the group element $g(x_i)$ is the most fundamental part of~(\ref{magic-formula}). When we allow for fields with non-trivial quantum numbers, the $K$-$K$ invariant functions should be modified to the space $(\ref{eq:covariance})$. These new covariance laws are dictated by lifts of individual fields. Once the element $g(x_i)$ and the space~(\ref{eq:covariance}) are fixed, the prefactor in~(\ref{magic-formula}) is the unique one which ensures that $G_4(x_i)$ satisfies the Ward identities.

\subsection{Example}\label{section3.4}

Throughout this work, we will illustrate the general constructions on the example of $\mathcal{N}=2$ superconformal symmetry in $d=1$ dimension. The complexified superconformal algebra for this case is $\mathfrak{g} = \mathfrak{sl}(2|1)$. Its bosonic subalgebra $\mathfrak{g}_{(0)}$ is spanned by the generators of dilations~$D$, translations~$P$, special conformal transformations $K$, and internal symmetry $R$. The odd subspace is four-dimensional and spanned by supertranslations~$Q_\pm$ and super special conformal transformations~$S_\pm$.

Many of our computations are most easily performed by working with a concrete representation of $\mathfrak{g}$. The smallest faithful representation of $\mathfrak{g}$ is 3-dimensional. We may choose the generators as
\begin{gather*} 
 D = \begin{pmatrix}
 1/2 & 0 & 0\\
 0 & -1/2 & 0\\
 0 & 0 & 0
 \end{pmatrix},\qquad P = \begin{pmatrix}
 0 & 1 & 0\\
 0 & 0 & 0\\
 0 & 0 & 0
 \end{pmatrix},\\
 K = \begin{pmatrix}
 0 & 0 & 0\\
 1 & 0 & 0\\
 0 & 0 & 0
 \end{pmatrix},\qquad R = \begin{pmatrix}
 -1 & 0 & 0\\
 0 & -1 & 0\\
 0 & 0 & -2
 \end{pmatrix},
\end{gather*}
for the four bosonic generators and
\begin{gather*} 
 Q_- = \begin{pmatrix}
 0 & 0 & 0\\
 0 & 0 & 0\\
 0 & 1 & 0
 \end{pmatrix},\qquad\! Q_+ = \begin{pmatrix}
 0 & 0 & 1\\
 0 & 0 & 0\\
 0 & 0 & 0
 \end{pmatrix},\qquad\! S_- = \begin{pmatrix}
 0 & 0 & 0\\
 0 & 0 & 0\\
 1 & 0 & 0
 \end{pmatrix},\qquad\! S_+ = \begin{pmatrix}
 0 & 0 & 0\\
 0 & 0 & 1\\
 0 & 0 & 0
 \end{pmatrix},
\end{gather*}
for the fermionic ones.

The superspace corresponding to $\mathfrak{g}$ is $\mathcal{M} = \mathbb{R}^{1|2}$. Its structure algebra is generated by one bosonic variable $u$ along with two Grassmann variables~$\theta$ and $\bar \theta$, which we write collectively as $x=(u,\theta,\bar\theta)$. The supergroup element $m$ we introduced above takes the following matrix form
\begin{equation*} 
m(x) = {\rm e}^{u P + \theta Q_+ + \bar \theta Q_-} = \begin{pmatrix}
 1 & X & \theta \\
 0 & 1 & 0 \\
 0 & -\bar\theta & 1
 \end{pmatrix} ,
\end{equation*}
where $X = u-\frac12 \theta \bar \theta$. The supergroup structure of $\mathcal{M}$ enabled us to define variables $x_{ij}$ in $(\ref{eq:xij})$. These can be determined by matrix multiplication
\begin{equation*}
 u_{ij} = u_i - u_j -\tfrac12\theta_i\bar\theta_j - \tfrac12\bar\theta_i\theta_j  ,\qquad
 \theta_{ij} = \theta_i - \theta_j  ,\qquad \bar\theta_{ij} = \bar\theta_i - \bar\theta_j.
\end{equation*}
Next, we turn to the Weyl inversion and supergroup elements built out of special superconformal transformations that were denoted $n(x)$. According to~(\ref{Weyl-inversion}) and~(\ref{eq:nx})
\begin{equation*} 
 w = {\rm e}^{\pi\frac{K-P}{2}} = \begin{pmatrix}
 0 & -1 & 0\\
 1 & 0 & 0\\
 0 & 0 & 1
 \end{pmatrix},\qquad n(x) = w^{-1} m(x) w = \begin{pmatrix}
 1 & 0 & 0\\
 -X & 1 & -\theta\\
 -\bar\theta & 0 & 1
 \end{pmatrix} .
\end{equation*}
These ingredients suffice to determine the factors of the Bruhat decomposition~(\ref{matrix-identity}) with $h=w$. It is expressed by the matrix identity
\begin{gather*}
 \begin{pmatrix}
 0 & -1 & 0\\
 1 & X & \theta\\
 0 & -\bar\theta & 1
 \end{pmatrix} = \begin{pmatrix}
 1 & -\frac1u \big(1+\frac{\theta\bar\theta}{2u}\big) & \theta/u\\
 0 & 1 & 0\\
 0 & -\bar\theta/u & 1
 \end{pmatrix} \begin{pmatrix}
 1 & 0 & 0\\
 u+\frac12\theta\bar\theta & 1 & \theta\vspace{1mm}\\
 \bar\theta & 0 & 1
 \end{pmatrix} \\
 \hphantom{\begin{pmatrix}
 0 & -1 & 0\\
 1 & X & \theta\\
 0 & -\bar\theta & 1
 \end{pmatrix} =}{}\times \begin{pmatrix}
 \frac1u\big(1-\frac{\theta\bar\theta}{2u}\big) & 0 & 0\\
 0 & u\big(1-\frac{\theta\bar\theta}{2u}\big) & 0\\
 0 & 0 & 1-\frac{\theta\bar\theta}{u}
 \end{pmatrix}  , 
\end{gather*}
from which one reads off the various factors
\begin{equation*}
 y(x) = w(x) = \left(\frac{-1}{u},\frac{\theta}{u},\frac{\bar\theta}{u}\right),\qquad z(x) = (-u,-\theta,-\bar\theta), \qquad k(t(x)) = {\rm e}^{-\log u^2 D + \frac{\theta\bar\theta}{2u}R} .
\end{equation*}
Let $G_4(x_i)$ be a four-point function of superconformal primary fields. These are labelled by conformal weights $\Delta_i$ and $R$-charges $r_i$. Our conventions are such that the corresponding representation of $K={\rm SO}(1,1)\times {\rm SO}(2)$ reads
\begin{equation*} \label{eq:rho-1d}
 \rho_{\Delta,r}\big({\rm e}^{\lambda D + \kappa R}\big) = {\rm e}^{-\Delta\lambda + r\kappa} .
\end{equation*}
The formula~(\ref{magic-formula}) now states that the correlator $G_4$ admits a unique representation
\begin{equation}\label{magic-1d}
 G_4(x_i) = \frac{{\rm e}^{r_2\frac{\theta_{12}\bar\theta_{12}}{2u_{12}}+r_4\frac{\theta_{34}\bar\theta_{34}}{2u_{34}}}}{u_{12}^{2\Delta_2}u_{34}^{2\Delta_4}} F\big({\rm e}^{-w(x_{21})\cdot X^w} {\rm e}^{x_{31}\cdot X}{\rm e}^{w(x_{43})\cdot X^w}\big),
\end{equation}
where $F$ is a $K$-spherical function on $G={\rm SL}(2|1)$, i.e., one that obeys
\begin{gather*}
 F\big({\rm e}^{\lambda_l D + \kappa_l R} g {\rm e}^{\lambda_r D + \kappa_r R}\big)={\rm e}^{(\Delta_2-\Delta_1)\lambda_l+(r_1+r_2)\kappa_l} {\rm e}^{(\Delta_3-\Delta_4)\lambda_r- (r_3+r_4)\kappa_r} F(g).
\end{gather*}
In~(\ref{magic-1d}) we used the notation $X = (P,Q_+,Q_-)$ and $X^w = w^{-1}(P,Q_+,Q_-)w = (-K,-S_+,S_-)$. With this equation, we conclude the discussion of the $\mathfrak{sl}(2|1)$ example for the present. The eigenbasis of the Laplacian in the space of $K$-spherical functions on ${\rm SL}(2|1)$ will be studied in Section~\ref{section5}.

\section{Superconformal crossing equations}\label{section4}

In this section we will construct the {\it crossing factor}, a matrix that is roughly defined as the ratio of tensor factors in two different channels. Namely, if the fields in the four-point function transform non-trivially under rotations, the Ward identities constrain the correlator to take the form
\begin{equation}\label{tensor-structures}
 G_4^\alpha(x_i) = \Omega(x_i)^\alpha_{\ I} F^I(u_a) .
\end{equation}
Here, $\alpha$ runs over a basis for the space of polarisations $V$ of the four fields and $I$ runs over the space for the space of four-point tensor structures $W$. If $d>3$ we have in general $\dim W<\dim V$. Detailed discussions of tensor structures can be found in \cite{Costa:2011mg,Cuomo:2017wme,Kravchuk:2016qvl,Osborn:1993cr, Schomerus:2016epl}. We will summarise some of the main points that will be important in the later subsections. For concreteness, let us focus on bosonic theories.

The configuration space $M^4$ of four points is foliated into orbits of $G$ under the diagonal action and the four-point function is completely specified by giving its values on one point of each orbit. Let us denote the space of orbits by $X=M^4/G$. The structure of this space might be complicated, but there is an open dense subset of $X$ which is a smooth manifold with local coordinates $(u_a)$. Since the action of $G$ on $M^4$ is not free, not every function $X\xrightarrow{}V$ gives a well-defined correlation function. To see this, let $x_1,\dots,x_4$ be four points in general position. The stabiliser of $(x_1,\dots,x_4)$ in $G$ under the diagonal action is isomorphic to ${\rm Spin}(d-2)$. Indeed, one can notice that this is the stabiliser when points are chosen as in~(\ref{points-choice}) -- it consists of rotations of the space spanned by vectors $e_3,\dots,e_d$. For other choices of the four points, the stabiliser subgroup is related to this one by conjugation. Let us denote the points from~(\ref{points-choice}) by~$x_i^0$ and their stabiliser by $B_{\rm bos}$. For any~$b\in B_{\rm bos}$, the Ward identities imply
\begin{equation*}
 G_4\big(x_i^0\big) = \big(\rho_1\big(db_{x_1^0}\big)\otimes\cdots \otimes\rho_4\big(db_{x_4^0}\big)\big) G_4\big(x_i^0\big) = (\rho_1(b)\otimes\cdots \otimes\rho_4(b) ) G_4\big(x_i^0\big),
\end{equation*}
where in the last equality we used that all elements of $B_{\rm bos}$ act on $M$ as linear transformations. In conclusion, $G_4\big(x_i^0\big)$ belongs to the space of invariants $V^{B_{\rm bos}}$. As a vector space, this is the direct sum of trivial representations of $B_{\rm bos}$ that appear in the decomposition
\begin{equation*}
{\rm Res}^K_{B_{\rm bos}}(\rho_1\otimes\cdots \otimes\rho_4) .
\end{equation*}
A generic orbit in $M^4$ contains a point of the form $\chi=(0,\infty,x_3,x_4)$ and corresponding spaces $V^{{\rm Stab}_G(\chi)}$ all have the same dimension. This allows to write the correlation function as in~(\ref{tensor-structures}), where $\dim W =\dim V^{B_{\rm bos}}$.

When the equation $(\ref{tensor-structures})$ is written for two different permutations of the points $x_i$ and the permutation symmetry of Euclidean correlation functions is used, one arrives at the following generalisation of $(\ref{crossing-eqn-scalars})$
\begin{equation*}
 F^I(u_a) = \mathcal{M}^I_{\ J}(u_a) F^J(u'_a) .
\end{equation*}
The matrix $\mathcal{M}^I_{\ J}$ is termed the crossing factor. A distinguishing feature of the crossing factor is its (super)conformal invariance. Therefore, whereas the tensor structures depend non-trivially on coordinates of all insertion points, the crossing factor is a function of cross ratios only. This fact can be used to compute the factor in a simple manner. Let us note that to do conformal bootstrap, the knowledge of $\mathcal{M}^I_{\ J}$ in general has to be supplemented by the analysis of 3-point tensor structures.

In the first subsection we will prove a useful proposition about the transformation properties of various group elements introduced in the previous section under superconformal transformations. In the second subsection, we define the crossing factor and prove its superconformal invariance. The third subsection contains the computation of the factor for general spinning fields in bosonic field theories, while the fourth treats the example of $\mathfrak{sl}(2|1)$ superconformal symmetry.

\subsection{Transformations of Bruhat factors}\label{section4.1}

In the previous section, we considered the Bruhat decomposition~(\ref{matrix-identity}) and its specialisation to $h=w$. We shall now study the ``functorial properties'' of these factors when $x$ is acted on by a superconformal transformation $h$. In particular, the following transformation laws can be established

\begin{Proposition} Under a superconformal transformation $h$, elements $g_{ij}$
and $k(t_{ji})$ transform as
\begin{equation}\label{proposition-tranformation-laws}
 g_{ij} \big(x^h\big) = h g_{ij}(x) k(t(x_i,h))^{-1},\qquad k\big(t_{ji}^h\big) = w k(t(x_i,h)) w^{-1} k(t_{ji}) k(t(x_j,h))^{-1}.
\end{equation}
Let us make a comment on the notation. Various objects in this subsection depend on the insertion points~$x_i$. However, to avoid having long expressions we have not explicitly kept this dependence in the notation, e.g., we write $y_{ij} = y(x_{ij})$ etc. We will adopt the rule that if the insertion points are transformed by a group element $h$, the corresponding objects will carry an upper index~$h$, e.g., $y_{ij}^h$, $t_{ij}^h$ etc. In particular, we alternatively write $x^h$ or $hx$ $($or very rarely~$y(x,h))$.
\end{Proposition}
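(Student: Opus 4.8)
The plan is to derive both identities in \eqref{proposition-tranformation-laws} directly from the definitions, using the Bruhat factorisation \eqref{matrix-identity} as the main computational tool together with the normality-type relations already recorded in Section~\ref{section3}. The key observation is that the element $g_{ij} = m(x_i) n(y_{ji})$ combines two pieces whose transformation under $h$ is governed, respectively, by the defining relation \eqref{matrix-identity} for the translation factor and by an analogous relation for the $n$-factor obtained from \eqref{eq:nx}. First I would compute $h\,m(x_i)$ via \eqref{matrix-identity}, which gives $h\,m(x_i) = m(x_i^h)\, n(z(x_i,h))\, k(t(x_i,h))$. Then I would left-multiply $g_{ij}(x)$ by $h$ and try to reorganise the product $h\,m(x_i)\,n(y_{ji})$ into the shape $m(x_i^h)\,n(y_{ji}^h)\cdot(\text{something in }K)$; the point is that $n(z(x_i,h))\,k(t(x_i,h))\,n(y_{ji})$ must collapse, after commuting the $K$-factor through $n$ and combining the two $\mathfrak n$-exponentials, into $n(y_{ji}^h)\,k(t(x_i,h))$ up to the way $y_{ji}$ itself transforms. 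This is exactly the statement that $y_{ji}$ is the ``inverted'' translation coordinate, so its transformation is the conjugate-by-$w$ version of the transformation of $x_i$, and the bookkeeping closes.

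Concretely, for the first identity I expect the cleanest route is: start from $g_{ij}(x^h) = m(x_i^h)\, n(y_{ji}^h)$, substitute $m(x_i^h) = h\, m(x_i)\, k(t(x_i,h))^{-1} n(z(x_i,h))^{-1}$ from \eqref{matrix-identity}, and show that the trailing factor $k(t(x_i,h))^{-1} n(z(x_i,h))^{-1} n(y_{ji}^h)$ equals $n(y_{ji})\, k(t(x_i,h))^{-1}$. The latter is an identity purely among $\mathfrak n$- and $\mathfrak k$-exponentials, which follows because $y_{ji}$ is built from $x_{ji} = x_j^{-1}x_i$ (in the group sense of \eqref{eq:xij}) and $w$-conjugation via \eqref{eq:nx}, so that the cocycle-type consistency of the Bruhat decomposition forces precisely this rearrangement. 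Once the first identity is in hand, the second one for $k(t_{ji}^h)$ should follow by writing the definition $w\, m(x_{ji}) = m(y_{ji})\, n(z_{ji})\, k(t_{ji})$, applying $h$-covariance to both sides through the first identity and through \eqref{matrix-identity} applied with $h$ replaced by $w\,k(t(x_i,h))\,w^{-1}$ (using that $w$ conjugates $\mathcal M$ to the special-conformal side, cf.\ the Weyl-inversion properties in Section~\ref{section3.1}), and then isolating the $K$-valued factor. The appearance of $w k(t(x_i,h)) w^{-1}$ on the right of the second formula is the fingerprint of this $w$-conjugation, while the extra $k(t(x_j,h))^{-1}$ comes from the $x_j$-dependence hidden in $x_{ji}$.

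I expect the main obstacle to be purely organisational rather than conceptual: keeping track of which insertion point ($x_i$ versus $x_j$) contributes which compensating $K$-factor, and making sure that when one commutes a $K$-element past an $n(\cdot)$ the argument transforms in the way dictated by the linear action of $K$ on $\mathfrak n$ (equivalently, on $\mathfrak m$ after $w$-conjugation). A secondary subtlety is that the Bruhat decomposition is only generically defined, so strictly the identities hold on the open dense locus where all the relevant factorisations exist; since both sides are (super)rational in the coordinates, this suffices. One should also be mildly careful in the super case that all manipulations respect the $\mathbb Z_2$-grading, but since $w$ and the relevant $K$-elements are even and the constructions in Section~\ref{section3} were already set up supergeometrically, no new issue arises. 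Having established \eqref{proposition-tranformation-laws}, the transformation law for the argument $g(x_i) = g_{12}^{-1} g_{34}$ of the $K$-spherical function follows immediately by composing the two instances $i=1$ and $i=3$ (with $j=2$ and $j=4$ respectively), which is what will be used in the next subsection to prove the superconformal invariance of the crossing factor.
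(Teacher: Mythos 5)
Your overall framework is the right one --- everything does come down to the Bruhat factorisation \eqref{matrix-identity}, the relation $m\big(x^h\big)=h\,m(x)\,k(t(x,h))^{-1}n(z(x,h))^{-1}$, and uniqueness of the factors enforced by the dilation grading --- but the proposal has a genuine gap at its central step. You reduce the first identity to the claim that
\begin{equation*}
k(t(x_i,h))^{-1}\,n(z(x_i,h))^{-1}\,n\big(y_{ji}^h\big)\;=\;n(y_{ji})\,k(t(x_i,h))^{-1},
\end{equation*}
and then assert that this ``follows from the cocycle-type consistency of the Bruhat decomposition.'' It does not follow automatically: this rearrangement identity is exactly equivalent to the first transformation law you are trying to prove (it encodes precisely how $y_{ji}$ transforms under $h$), so invoking it without proof is circular. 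Knowing only that $h^{-1}g_{ij}\big(x^h\big)$ and $g_{ij}(x)$ share the same $m$-Bruhat factor $m(x_i)$ tells you they differ by right multiplication by some element $k_{ij}n_{ij}$ of $KN$; nothing at this stage forces the nilpotent part $n_{ij}$ to be trivial, and that is the whole content of the claim.

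The missing idea in the paper's argument is to run the same comparison \emph{simultaneously} for the second presentation $g_{ij}(x)w^{-1}=m(x_j)\,k(t_{ji})^{-1}n(z_{ji})^{-1}$, whose $m$-factor is $m(x_j)$. This yields a second relation $h^{-1}g_{ij}\big(x^h\big)w^{-1}=g_{ij}(x)w^{-1}k'_{ij}n'_{ij}$, and consistency of the two gives $k_{ij}n_{ij}=\big(w^{-1}k'_{ij}w\big)\big(w^{-1}n'_{ij}w\big)$. Since $\mathrm{Ad}_w$ maps $\mathfrak n$ to $\mathfrak m$ (positive to negative dilation weight), the grading forces $n_{ij}=n'_{ij}=1$; the weight-zero part then identifies $k_{ij}=k(t(x_i,h))^{-1}$, giving the first claim, and expanding the primed relation and cancelling the $m(x_j)$ factors gives the second. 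Your sketch of the second identity (``apply \eqref{matrix-identity} with $h$ replaced by $w\,k(t(x_i,h))\,w^{-1}$'') is likewise too vague to check and would need to be replaced by this explicit cancellation. Your remarks about the decomposition holding only on a dense locus and about $K$ normalising $N$ are correct but peripheral; without the double-factorisation/grading argument the proof does not close.
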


\begin{proof} Consider the system of equations
\begin{align}\label{system-1}
 & m(x_i) n(y_{ji}) = g_{ij}(x), \qquad m(x_j) k(t_{ji})^{-1} n(z_{ji})^{-1} = g_{ij}(x)w^{-1} .
\end{align}
The first equation is the definition of $g_{ij}(x)$ and the second one was proved in the previous section. Let us apply a transformation $h$ to all $x_i$-s and use
\begin{equation*}
 m(x^h) = h m(x) k(t(x,h))^{-1} n(z(x,h))^{-1} .
\end{equation*}
This relation follows at once from definitions of $k(t(x,h))$ and $n(z(x,h))$. Doing these two steps, we get another system of equations
\begin{gather}\label{system-2}
 h m(x_i) k(t(x_i,h))^{-1} n(z(x_i,h))^{-1} n\big(y_{ji}^h\big) = g_{ij}\big(x^h\big),\\
h m(x_j) k(t(x_j,h))^{-1} n(z(x_j,h))^{-1} k\big(t^h_{ji}\big)^{-1} n\big(z^h_{ji}\big)^{-1} = g_{ij}\big(x^h\big)w^{-1}.\label{system-2-2}
\end{gather}
We can compare this system to~(\ref{system-1}). Elements $g_{ij}(x)$ and $h^{-1}g_{ij}\big(x^h\big)$ have the same $m$ Bruhat factor and similarly $g_{ij}(x)w^{-1}$ and $h^{-1}g_{ij}\big(x^h\big)w^{-1}$. It follows that they are related by
\begin{align}\label{main-step}
 h^{-1}g_{ij}\big(x^h\big) = g_{ij}(x) k_{ij} n_{ij},\qquad h^{-1} g_{ij}\big(x^h\big) w^{-1} = g_{ij}(x) w^{-1} k'_{ij} n'_{ij},
\end{align}
for some $k_{ij}$, $k'_{ij}$, $n_{ij}$, $n'_{ij}$. Putting these two equations together, we have
\begin{gather*}
 k_{ij} n_{ij} = \big(w^{-1} k'_{ij}w\big) (w^{-1} n'_{ij} w) .
\end{gather*}
We now make the key observation -- the grading with respect to the dilation weight requires $n_{ij} = n'_{ij}=1$. Also, by looking at elements of conformal weight zero in the first equation of~(\ref{system-1}) and~(\ref{system-2}). we see that $k_{ij} = k(t(x_i,h))^{-1}$. Having established these facts, the proposition follows from~(\ref{main-step}). To get the first claim, one simply substitutes the expressions for $k_{ij}$ and $n_{ij}$ into the first equation. The second claim requires a few more steps. Let us begin by substituting $n'_{ij}=1$ and $k'_{ij} = w k_{ij} w^{-1}$ into the second equation in~(\ref{main-step}). After cancelling $w^{-1}$ factors on the right
\begin{equation*}
 h^{-1} g_{ij}\big(x^h\big) = g_{ij}(x)k(t(x_i,h))^{-1} .
\end{equation*}
Next, we use~(\ref{system-2-2}) and the second equation of~(\ref{system-1}) the to expand $g_{ij}(x^h)$ and $g_{ij}(x)$ on the two sides and cancel the $m(x_j)$ factors
\begin{equation*}
 k(t(x_j,h))^{-1} n(z(x_j,h))^{-1} k\big(t^h_{ji}\big)^{-1} n\big(z^h_{ji}\big)^{-1} w = k(t_{ji})^{-1} n(z_{ji})^{-1} w k(t(x_i,h))^{-1} .
\end{equation*}
The grading on $\mathfrak{g}$ allows to equate the $k$-factors from the two sides
\begin{equation*}
 k(t(x_j,h))^{-1} k\big(t^h_{ji}\big)^{-1} = k(t_{ji})^{-1} w k(t(x_i,h))^{-1} w^{-1} .
\end{equation*}
Rearranging terms now gives the second claim and completes the proof of the proposition.
\end{proof}

\subsection{Cartan coordinates and the crossing factor}\label{section4.2}

By this point, the usefulness of the result (\ref{magic-formula}) may not be clear. This formula says that conformal four-point functions may be regarded as $K$-spherical functions. The benefit of such transformation is that the latter can be analysed with established techniques of group theory. The first such technique is the Cartan decomposition that we shall now introduce.

Consider for the moment the bosonic conformal group $G = {\rm Spin}(d+1,1)$. Almost all elements $g\in G$ can be written in the form
\begin{equation*}
 g = k_l a k_r,
\end{equation*}
where $k_l, k_r\in K={\rm SO}(1,1)\times {\rm Spin}(d)$ and $A$ is the two-dimensional abelian group generated by $\{P_1+K_1,P_2-K_2\}$. We shall parametrise $A$ by local coordinates $(u_1,u_2)$ according to
\begin{equation*}
a(u_1,u_2) = {\rm e}^{\frac{u_1+u_2}{4}(P_1+K_1) -{\rm i} \frac{u_1-u_2}{4}(P_2 - K_2)} .
\end{equation*}
A vector-valued function on $G$ that is covariant with respect to both left and right regular actions of $K$ is uniquely specified by the values it takes on~$A$. Therefore, the $K$-spherical functions may be considered as functions of two variables $u_1$, $u_2$, \cite{Schomerus:2017eny, Schomerus:2016epl}. This is in agreement with the fact that four-point functions in a conformal field theory depend on two cross ratios.

It is well-known that in superconformal theories, one has additional fermionic invariants on which four-point functions depend. This can also be understood from a generalisation of the Cartan decomposition to the supersymmetric setup. The generalisation is achieved as follows. Supergroup elements are written as
\begin{equation} \label{eq:sCartan}
g = k_l \eta_l a \eta_r k_r ,
\end{equation}
where $k_{l}$ and $k_{r}$ are associated with the subgroup $K$ of rotations, dilations and R-symmetry transformations. The factors $\eta_l$ and $\eta_r$ are associated with fermionic generators. More specifically, $\eta_l$ is obtained by exponentiation of generators of negative $R$-charge and $\eta_r$ from generators with positive charge. From now on, we assume that the superconformal algebra $\mathfrak{g}$ is of type~I~-- see appendix for the definition and the list of such algebras. For the time being, we only note that in any superconformal algebra of type I there exists a distinguished $\mathfrak{u}(1)$ subalgebra defined by~(\ref{U1R}), which is a part of internal symmetry algebra~$U_r$. The generator of this subalgebra will be denoted by~$R$. Half of the fermionic generators half positive $R$-charge, while the others have negative. One way to arrive at the decomposition~(\ref{eq:sCartan}) will be explained below,~(\ref{param}).

The factorisation of supergroup elements $g$ in the form \eqref{eq:sCartan} is not unique. In fact, given any such factorisation we can produce another one by the transformation
\begin{equation} \label{eq:gaugeB}
 (k_l,\eta_l;k_r,\eta_r ) \rightarrow \big(k_l b, b^{-1} \eta_l b ; b^{-1} k_r, b^{-1} \eta_r b\big),
\end{equation}
where $b\in B = {\rm Spin}(d-2) \times U_r$. This is how the stabiliser group of four points appears in the Calogero--Sutherland coordinates. The elements of~$B$ commutes with $a = a(u_1,u_2)$. At the same time, the elements $b^{-1} \eta_{l,r} b$ can still be written as exponentials of fermionic generators with negative~$(l)$ and positive~$(r)$ $U(1)$ $R$-charge, respectively. Hence the gauge transformation~\eqref{eq:gaugeB} respects the Cartan decomposition. In the following, we shall fix the Cartan factors of group elements in some arbitrary way, and refer to this choice as gauge fixing. It will be shown that all quantities that are of interest do not depend on this choice.

Let us now return to the equation~(\ref{magic-formula}). This equation treats each of the four insertion points differently and hence it breaks the permutation symmetry of correlators in a Euclidean quantum field theory. We will be concerned two particular permutations, $\sigma_s = 1$ and $\sigma_t = (2 4)$, customarily called the $s$-channel and the $t$-channel. Given any choice of $\sigma$, we can extend the lifting formula~\eqref{magic-formula} to become
\begin{equation} \label{eq:magic-formulasigma}
G_4(x_i) = \rho_{\sigma(2)}\big(k(t_{\sigma(2)\sigma(1)})^{-1}\big) \rho_{\sigma(4)}\big(k(t_{\sigma(4)\sigma(3)})^{-1}\big) F_\sigma(g(x_{\sigma(i)})).
\end{equation}
This equation defines $F_\sigma$. The factor $\rho_{\sigma(i)}$ acts on the $\sigma(i)^{\rm th}$ tensor factor in $V = V_1\otimes \cdots \otimes V_4$ and it acts trivially on all other tensor factors. To evaluate~(\ref{eq:magic-formulasigma}) further, we
decompose the argument $g(x_{\sigma(i)})$ of the functional $F_\sigma$ in Cartan factors
\begin{equation*}
 g(x_{\sigma(i)}) = k_{\sigma,l}(x_i) \eta_{\sigma,l}(x_i)a_\sigma(x_i)\eta_{\sigma,r}(x_i) k_{\sigma,r}(x_i).
\end{equation*}
The formula \eqref{eq:magic-formulasigma} and covariance properties of $F_\sigma$ give
\begin{align*}
G_4(x_i) & = \rho_{\sigma(2)}\big(k(t_{\sigma(2)\sigma(1)})^{-1}\big)
\rho_{\sigma(4)}\big(k(t_{\sigma(4)\sigma(3)})^{-1}\big) F_\sigma(g(x_{\sigma(i)})) \\
 & = \rho_{\sigma(2)} (k(t_{\sigma(2)\sigma(1)}) )^{-1} \rho_{\sigma(4)}
 (k(t_{\sigma(4)\sigma(3)}) )^{-1} F_\sigma(k_{\sigma,l}\eta_{\sigma,l}
a_\sigma \eta_{\sigma,r}k_{\sigma,r}) \\
&  = \rho_{\sigma(1)}(k_{\sigma,l})\rho_{\sigma(2)}\big(k(t_{\sigma(2)\sigma(1)})^{-1}
k_{\sigma,l}^{w}\big)\rho_{\sigma(3)}\big(k_{\sigma,r}^{-1}\big)\rho_{\sigma(4)}
\big(k(t_{\sigma(4)\sigma(3)})^{-1} \big(k^{-1}_{\sigma,r}\big)^{w}\big)\\
&\quad{}\times
F_\sigma(\eta_{\sigma,l}a_\sigma \eta_{\sigma,r}) .
\end{align*}
For simplicity of notation, we omitted the dependence of Cartan factors on the insertion points and wrote $k_{\sigma,l} = k_{\sigma,l}(x_i) = k_l(x_{\sigma(i)})$ etc. In the $s$- and $t$-channels, the last formula reduces to
\begin{gather*} 
G_4(x_i) = \rho_1 (k_{s,l}) \rho_2\big(k(t_{21})^{-1}k^{w}_{s,l}\big) \rho_3\big(k_{s,r}^{-1}\big)\rho_{4}\big(k(t_{43})^{-1} \big(k^{w}_{s,r}\big)^{-1}\big) F_s(\eta_{s,l}a_s \eta_{s,r}), \\
G_4(x_i) = \rho_1 (k_{t,l}) \rho_4\big(k(t_{41})^{-1} k^{w}_{t,l}\big) \rho_3\big(k_{t,r}^{-1}\big)\rho_{2}\big(k(t_{23})^{-1} \big(k^{w}_{t,r}\big)^{-1}\big) F_t(\eta_{t,l}a_t \eta_{t,r}) . 
\end{gather*}
The factors in front of $F_s$ and $F_t$ are possible choices for the prefactor $\Omega(x_i)^\alpha_{\ I}$ in $s$- and $t$-channels (this factor is not unique, as it can always be multiplied by a function of cross ratios). They carry the dependence of $G_4(x_i)$ that is fixed by Ward identities. It follows that the crossing factor $\mathcal{M} = \mathcal{M}_{\rm st}$ can be obtained from the matrix $\mathcal{C}^\alpha_{\ \beta} = \rho_1(\kappa_1)\otimes\cdots \otimes\rho_4(\kappa_4)$ with
 \begin{alignat*}{3}
& \kappa_1 = k_{t,l}^{-1}k_{s,l}  ,  \qquad &&
\kappa_2 = k^w_{t,r} k(t_{23})k(t_{21})^{-1} k^w_{s,l}, &\\
& \kappa_3 = k_{t,r}k_{s,r}^{-1}  ,  \qquad &&
\kappa_4 = \big(k^w_{t,l}\big)^{-1} k(t_{41})k(t_{43})^{-1} \big(k^w_{s,r}\big)^{-1}.&
\end{alignat*}
As indicated by its indices, the matrix $\mathcal{C}^\alpha_{\ \beta}$ acts on the space of all polarisations, rather than on the space of tensor structures. To get to the crossing factor, one has to project to the space of tensor structures and this is done by restricting $\mathcal{C}$ to $B$-invariants, $\mathcal{M}^I_{\ J} = (\mathcal{C}^\alpha_{\ \beta})^B$. In order to compute the matrix $\mathcal{C}$ we first show that it is invariant under superconformal transformation, up to gauge transformations. This then implies that the projection~$\mathcal{M}$ is a function of cross ratios only and so it can be computed after moving the insertion points into a special positions.

To show invariance of $\mathcal{M}$ we study the dependence of each of the factor $\kappa_i$ on the insertion points. In this endeavour, the key role is played by the proposition $(\ref{proposition-tranformation-laws})$ proved in the first subsection. As its immediate consequence, one observes that the supergroup elements $g_\sigma(x_i)$ transform as
\begin{equation*}
g_\sigma\big(x_i^h\big) = k(t(x_{\sigma(1)},h)) g_\sigma(x_i) k(t(x_{\sigma(3)},h))^{-1} .
\end{equation*}
Because of the gauge freedom of the Cartan decomposition which we described in~\eqref{eq:gaugeB}, knowing the behaviour of $g_\sigma(x_i)$ under conformal transformations does not allow us to uniquely
determine the transformation law of the factors, but we can conclude that
\begin{gather*}
k_{\sigma,l}\big(x^h_i\big) = k(t(x_{\sigma(1)},h)) k_{\sigma,l}(x_i) b_\sigma(x_i,h) , \qquad
k_{\sigma,r}\big(x^h_i\big) = b^{-1}_\sigma(x_i,h) k_{\sigma,r}(x_i) k(t(x_{\sigma(3)},h))^{-1},
\end{gather*}
for some factor $b$ that may depend on the channel, the superspace insertion points~$x_i$ and the superconformal transformation~$h$, yet must be the same for the left and right
factors $k_l$ and $k_r$. For the case of $s$- and $t$-channels, these become
\begin{equation*}
k_{s/t,l}\big(x^h_i\big) = k(t(x_{1},h)) k_{s/t,l} b_{s/t}(x_i,h) , \qquad
k_{s/t,r}\big(x^h_i\big) = b_{s/t}^{-1}(x_i,h) k_{s/t,r} k(t(x_{3},h))^{-1}.
\end{equation*}
With these transformation laws it is now easy to verify that all four tensor components $\kappa_i$ of~$\mathcal{M}_{\rm st}$ are indeed invariant under superconformal transformations, up to gauge transformations
\begin{equation*}
\kappa_i\big(x^h_k\big) = b^{-1}_t(x_k,h) \kappa_i(x_k) b_s(x_k,h)  , \qquad
\kappa_j\big(x^h_k\big) = w b^{-1}_t(x_kh) w^{-1} \kappa_j(x_k) w b_s(x_k,h) w^{-1} ,
\end{equation*}
where $i=1,3$ and $j=2,4$. To get the last two relations one employs the formula for~$\big(t^h_{ji}\big)$ given in~(\ref{proposition-tranformation-laws}).

We have shown that the matrix $\mathcal{C}$ is invariant up to a gauge transformation. This actually implies superconformal invariance of the super-crossing factor $\mathcal{M}$. Details concerning gauge-independence can be found in~\cite{Buric:2020buk}.

\subsection{Crossing factors in bosonic theories}\label{section4.3}

The analysis we have performed in the previous subsections holds for conformal and superconformal symmetries alike. We shall now evaluate the crossing factor $\mathcal{M}_{\rm st}$ for spinning correlators in bosonic conformal field theories. In this case, the problem actually reduces to one on the 2-dimensional conformal group, as we shall show presently.

Let us deviate from previous notations and use $G$ to denote the bosonic conformal group ${\rm SO}(d+1,1)$ and assume $d>2$. Since the crossing factor is conformally invariant, in computing~$\mathcal{M}(u,v)$ we may assume that~$x_i$ are any points that give the correct cross ratios $u$ and $v$. In particular, all points can be assumed to lie in the 2-dimensional spanned by the first two unit vectors $e_1$, $e_2$ of the $d$-dimensional space $\mathbb{R}^d$. In this case, the elements $m(x_i)$, $n(x_i)$, $g_{ij}$ and~$g(x_i)$ all belong to the conformal group of the plane $G_P={\rm SO}(3,1)\subset G$. Within $G_P$, the element $g_\sigma(x_i)$ admits a unique Cartan decomposition. However, since the abelian group $A$ is a subgroup of $G_P$, this decomposition serves as a valid Cartan decomposition of $g_\sigma(x_i)$ in $G$ as well. That is, the Cartan decomposition of $G_P$ defines a particular gauge fixing of Cartan factors for the family of group elements $g_\sigma(x_i)$. The rotations in $G_P$ Lie in the $U(1)$ group generated by~$M_{12}$. These rotations commute with the Weyl inversion $w$ when $d>2$. Hence the factors $\kappa_i$ that arise in the transition from $s$- to $t$-channel must be of the form
\begin{equation*}
 \kappa_i = {\rm e}^{\gamma_i D} {\rm e}^{\varphi_i M_{12}} ,
\end{equation*}
for some functions $\gamma_i$ and $\varphi_i$ that depend on the two cross ratios constructed out of insertion points $x_i$. A direct calculation gives
\begin{equation*}
 \kappa_1 = \kappa_3 = {\rm e}^{\gamma D + \alpha M_{12}},\qquad \kappa_2 = \kappa_4 = {\rm e}^{\gamma D - \alpha M_{12}} ,
\end{equation*}
 with
\begin{equation*} {\rm e}^{4\gamma} = \frac{x_{12}^2 x_{34}^2}{x_{14}^2 x_{23}^2},\qquad
 {\rm e}^{2i\alpha} = \frac{\cosh{\frac{u_1}{2}}}{\cosh{\frac{u_2}{2}}} .
\end{equation*}
The coordinates on $A$ are related to cross ratios of section 2 by $\sinh^{-2}\frac{u_i}{2} = z_i$. We have performed the calculation after moving the points to a configuration
\begin{gather*}
 x_1 = \frac{\cosh^2\frac{u_1}{2}+\cosh^2\frac{u_2}{2}}{2\cosh^2\frac{u_1}{2}\cosh^2\frac{u_2}{2}} e_1 - {\rm i}
 \frac{\cosh^2\frac{u_1}{2}-\cosh^2\frac{u_2}{2}}{2\cosh^2\frac{u_1}{2}
 \cosh^2\frac{u_2}{2}} e_2 ,\\ x_2 = 0 ,\qquad x_3 = e_1 ,\qquad x_4 = \infty e_1 .
\end{gather*}
Although $\mathcal{M}$ was originally defined using representations of $K = {\rm SO}(1,1)\times {\rm SO}(d)$, it is computed using only representation theory of ${\rm SO}(1,1)\times {\rm SO}(2)$.

Let us elaborate on the last comment by looking at theories in $d=3$ dimensions. Following~\cite{Buric:2019dfk} we parametrise the elements $r$ of the 3-dimensional rotation group through Euler angles,
\begin{equation*}
 r(\phi,\theta,\psi) = {\rm e}^{-\phi M_{12}} {\rm e}^{-\theta M_{23}} {\rm e}^{-\psi M_{12}}.
\end{equation*}
With this choice of coordinates, the elements $\kappa_i$ have $\phi =\pm\alpha$ and
$\theta = \psi = 0$. Next let us recall that matrix elements of the spin-$j$
representation of ${\rm SU}(2)$ read
\begin{equation*}
t^j_{m n} (\phi,\theta,\psi) = \langle j,m| g(\phi,\theta,\psi) | j,n\rangle =
{\rm e}^{-{\rm i}(m\phi+n\psi)} d^j_{m n}(\theta).
\end{equation*}
Here, the function $d^j_{m n}$ is known as Wigner's $d$-function. It is expressed
in terms of Jacobi polynomials $P^{(\alpha,\beta)}_n$ as
\begin{equation*}
d^j_{m n}(\theta) = {\rm i}^{m-n} \sqrt{\frac{(j+m)!(j-m)!}{(j+n)!(j-n)!}}
\left(\sin\frac{\theta}{2}\right)^{m-n} \left(\cos\frac{\theta}{2}\right)^{m+n}P^{(m-n,m+n)}_{j-m}(\cos\theta).
\end{equation*}
For $\theta=0$, the only non-zero matrix elements are those with $m=n$. Furthermore
\begin{equation*}
t^j_{n n}(\pm\alpha,0,0) = {\rm e}^{\mp {\rm i}n\alpha} P^{(0,2n)}_{j-n}(1) =
{\rm e}^{\mp {\rm i}n\alpha} = \left(\frac{\cosh\frac{u_1}{2}}{\cosh\frac{u_2}{2}}\right)^{\mp\frac{n}{2}}.
\end{equation*}
Since the stabiliser group $B = {\rm SO}(d-2)$ for a bosonic conformal field theory in $d=3$ dimensions is trivial, so taking $B$-invariants is trivial. Putting all this together we conclude that the crossing factor reads
\begin{equation*}
 \mathcal{M}^I_{\ J} = \mathcal{M}^{ijkl}_{pqrs} = \left(\frac{u}{v}\right)^{-\frac14\sum\Delta_i} \left(\frac{\cosh\frac{u_1}{2}}{\cosh\frac{u_2}{2}}\right)^{\frac12(i+k-j-l)} \delta^i_p \delta^j_q \delta^k_r \delta^l_s,
\end{equation*}
where $u$, $v$ are the usual $s$-channel cross ratios. Indices $i,p$ run through a basis of the representation space $V_1$ of $K$, $j$, $q$ through a basis of $V_2$ etc. The first factor in this result for the spinning crossing factor is well known from scalar correlators. For spinning correlators, it gets multiplied by a diagonal matrix whose entries are integer powers of ${\rm e}^{2{\rm i}\alpha}$. The analysis of this section can be repeated to include other channels. The argument given in the beginning of this subsection still goes through~-- one can compute the crossing factor from a two-dimensional theory. This again leads to the above structure of the crossing factor. We have done the computation of the crossing between $s$- and $u$-channels, where $\cosh{u_i/2}$ is replaced by~$\sinh{u_i/2}$ (and the scalar prefactor is modified appropriately).

\subsection{Example}\label{section4.4}

We can now put all the above together and compute the crossing factor between the $s$- and the $t$-channel for the $\mathcal{N} = 2$ superconformal algebra in one dimension. To this end, the first step
is to find the group elements $g_s(x_i)$ and $g_t(x_i)$ which appear in the argument of the covariant function $F$. In the previous section, we provided formulas for all ingredients that make up these elements.
Even for the simple example at hand, writing $g_s$ and $g_t$ as $3\times 3$ matrices whose coefficients are functions in all the $u_i$, $\theta_i$, $\bar\theta_i$ for $i=1, \dots, 4$ is rather cumbersome. At this point, the superconformal invariance comes to our rescue, as it allows to move the four points to the special position
\begin{equation} \label{eq:xigauge}
 x_1 = \big(x,\theta_1,\bar\theta_1\big),\qquad x_2 = (0,0,0),\qquad x_3 = \big(1,\theta_3,\bar\theta_3\big),\qquad x_4 = (\infty,0,0) .
\end{equation}
With this gauge choice, the entries of the matrices $g_s(x_i)$ and $g_t(x_i)$ depend on the bosonic coordinate~$x$ and the four Grassmann variables $\theta_1$, $\theta_3$ and $\bar \theta_1$, $\bar\theta_3$ only.

Next, we need to find the Cartan decomposition of the elements $g_s$ and $g_t$. The Cartan coordinates on ${\rm SL}(2|1)$ are introduced by
\begin{gather*}
g={\rm e}^{\kappa R}{\rm e}^{\lambda_l D}{\rm e}^{\bar q Q_-+\bar s S_-}{\rm e}^{\frac{u}{2}(P+K)}{\rm e}^{qQ_++sS_+}{\rm e}^{\lambda_r D} .
\end{gather*}
This agrees with the general prescription \eqref{eq:sCartan}, except that the abelian factor $A$ is one-dimen\-sio\-nal rather than two-dimensional. The elements $g_s$, $g_t$ and their Cartan coordinates are found by simple multiplication of supermatrices. The bosonic Cartan coordinates in $s$-channel read
\begin{gather*}
 \cosh^2 \frac{u_s}{2} = \frac1x \left( 1 - \frac12\theta_3\bar\theta_3 -\frac{\theta_1\bar\theta_1}{2x} + \frac{\theta_1\bar\theta_3}{x} +\frac{\theta_1\bar\theta_1\theta_3\bar\theta_3}{4x} \right),\\
 {\rm e}^{\lambda_{s,l}-\lambda_{s,r}} = \left(1-x-\frac12\theta_1\bar\theta_1-\frac12\theta_3\bar\theta_3+\theta_1\bar\theta_3\right) \left(x-\frac12\theta_1\bar\theta_1\right),\\
 {\rm e}^{\lambda_{s,l}+\lambda_{s,r}} \left(1+\frac12\theta_3\bar\theta_3\right)\left(x-\frac12\theta_1\bar\theta_1\right), \qquad {\rm e}^{-2\kappa_s} = 1 + \frac{\theta_1}{x}(\bar\theta_1 - \bar\theta_3) .
\end{gather*}
In the $t$-channel, they are
\begin{gather*}
 \cosh^2 \frac{u_t}{2} = x\left( 1 + \frac12\theta_3\bar\theta_3 + \frac{\theta_1\bar\theta_1}{2x} -\theta_1\bar\theta_3 + \frac{\theta_1\bar\theta_1\theta_3\bar\theta_3}{4x}\right),\\
 {\rm e}^{\lambda_{t,l}-\lambda_{t,r}} = - \left(1-x-\frac12\theta_1\bar\theta_1-\frac12\theta_3\bar\theta_3+\theta_1\bar\theta_3\right)\left(1+\frac12\theta_3\bar\theta_3\right),\\
 {\rm e}^{\lambda_{t,l}+\lambda_{t,r}} = \left(1-\frac12\theta_3\bar\theta_3\right)\left(x+\frac12\theta_1\bar\theta_1\right), \qquad {\rm e}^{-2\kappa_t} = 1 + \bar\theta_3 (\theta_3 - \theta_1) .
\end{gather*}
The fermionic Cartan coordinates, on the other hand, are given by the following expressions
\begin{gather*}
q_s = {\rm e}^{\frac12\lambda_{s,r}}\left(\theta_3 -\frac{\theta_1}{x} \left( 1-\frac12\theta_3\bar\theta_3\right)\right),\qquad s_s = {\rm e}^{-\frac12\lambda_{s,r}}\frac{\theta_1}{x},\\
 \bar q_s = {\rm e}^{-\frac12\lambda_{s,l}}\big(\bar\theta_3 -\bar\theta_1\big),\qquad \bar s_s = -{\rm e}^{\frac12\lambda_{s,l}}\frac{\bar\theta_3}{x},\\
 q_t = {\rm e}^{\frac12\lambda_{t,r}}(\theta_3 - \theta_1),\qquad s_t = -{\rm e}^{-\frac12\lambda_{t,r}}\theta_1\left(1-\frac12\theta_3\bar\theta_3\right),\\
 \bar q_t = -{\rm e}^{-\frac12\lambda_{t,l}}\left(\bar\theta_1 - \bar\theta_3\left(x+\frac12\theta_3\bar\theta_1\right)\right),\qquad \bar s_t = {\rm e}^{\frac12\lambda_{t,l}}\bar\theta_3 .
\end{gather*}
Finally, using these expressions, we can compute crossing factor $\mathcal{M}_{\rm st}$ between the two channels. For the superconformal algebra $\mathfrak{sl}(2|1)$ the group $K$ is generated by dilations $D$ and $R$-symmetry transformations~$R$. Therefore, it is abelian, so all its irreducible representations are 1-dimensional. The factor $\mathcal{M}_{\rm st}$ is hence just a single function in the variables $x$, $\theta_1$, $\theta_3$ and~$\bar \theta_1$,~$\bar\theta_3$. It depends, of course, on the choice of representations $(\Delta_i,r_i)$ for the external superfields. Note that in our gauge~\eqref{eq:xigauge} the factors $k(t_{41})$ and $k(t_{43})$ are trivial. Therefore, we have
\begin{alignat*}{3}
& \kappa_1 = {\rm e}^{(\lambda_{s,l}-\lambda_{t,l})D + (\kappa_s - \kappa_t) R} ,\qquad&&
 \kappa_4 = {\rm e}^{(\lambda_{t,l}+\lambda_{s,r})D-\kappa_t R},&\\
& \kappa_3 = {\rm e}^{(\lambda_{t,r}-\lambda_{s,r})D} ,\qquad &&\kappa_2 =
 {\rm e}^{-(\lambda_{t,r}+\lambda_{s,l}-\log x^2)D + (\kappa_s - \frac12\theta_3\bar\theta_3 +
 \frac{\theta_1\bar\theta_1}{2x})R}.&
\end{alignat*}
Therefore, the Cartan coordinates from above yield the following expression for $\mathcal{M}_{\rm st}$
\begin{gather*}
 \mathcal{M}_{\rm st}  = {\rm e}^{\frac{{\rm i}\pi}{2}(\Delta_2+\Delta_4-\Delta_1-\Delta_3)} x^{-2\Delta_1}\alpha^{\frac32\Delta_1-\frac12\Delta_2-\frac12\Delta_3-\frac12\Delta_4} \\
\hphantom{\mathcal{M}_{\rm st}  =}{} \times \beta^{\frac12\Delta_1+\frac12\Delta_2-\frac32\Delta_3+\frac12\Delta_4}{\rm e}^{r_1(\kappa_s-\kappa_t) +r_2(\kappa_s-\frac12\theta_3\bar\theta_3+\frac{\theta_1\bar\theta_1}{2x})-r_4\kappa_t},
\end{gather*}
where $\alpha$ and $\beta$ are defined by
\begin{equation*}
 \alpha = x + \tfrac12\theta_1\bar\theta_1,\qquad \beta = 1 - \tfrac12\theta_3 \bar\theta_3 .
\end{equation*}
Before ending this section, let us mention that, in order to analyse crossing equations, one would expand functions $f(x,\theta_i,\bar\theta_i)$ in Grassmann variables and restrict to $B$-invariants, which, in the case at hand are $R$-invariants. In this process, the factor~$\mathcal{M}$ is turned to a $6\times6$ matrix of differential operators. Details on this point are given in~\cite{Buric:2020buk}.

\section{Casimir equations and their solution}\label{section5}

The aim of this section is to study the Casimir equations for superconformal partial waves. We will review the method of computing these functions that was introduced in~\cite{Buric:2019rms}. However, let us first give a short account on previous works on superconformal blocks. We will then state which open problems our method is supposed to address.

Superconformal partial waves have been so far computed in a number of examples, for various spacetime dimensions and types of correlators. There are several techniques of deriving them, which include making an ansatz as a sum of bosonic blocks and fixing coefficients using Ward identities \cite{Chang:2017xmr,Dolan:2004mu,Gimenez-Grau:2020jrx,Lemos:2016xke,Li:2016chh, Liendo:2016ymz,Liendo:2018ukf,Nirschl:2004pa,Poland:2010wg}, directly evaluating shadow integrals \cite{Fitzpatrick:2014oza} and solving the appropriate Casimir differential equations \cite{Bissi:2015qoa,Bobev:2015jxa,Bobev:2017jhk, Khandker:2014mpa,Lemos:2015awa}. What is common to these works is that they either focus on correlators of short operators so that the blocks have only one component, or otherwise restrict to the bottom component of blocks (by setting all Grassmann variables to zero). The bottom component takes the form of a finite sum of {\it scalar} bosonic blocks. It is desirable to extend the analysis and derive all components of superconformal blocks (termed {\it long blocks}) because they lead to a larger set of crossing equations for the same OPE data. First steps in this direction have been performed in \cite{Cornagliotto:2017dup,Gimenez-Grau:2019hez,Kos:2018glc} where certain long blocks in 1-dimensional problems have been derived. As far as we know, the only computation of long blocks that is not ultimately reducible to one dimension was done in~\cite{Ramirez:2018lpd}. It is expected that long blocks in general should be expressible as finite sums of spinning bosonic blocks.\footnote{There is another, very general but less explicit, approach to superconformal partial waves proposed in~\cite{Doobary:2015gia}.}

The construction of~\cite{Buric:2019rms} that we will describe address the above issues in two respects. First, it produces the Casimir equations for any kind four-point functions in theories with type I superconformal symmetry in terms of Casimir equations for appropriate spinning bosonic partial waves. This is possible due to the fact that under the map~(\ref{magic-formula}) the Casimir equations are carried to the eigenvalue problem for the Laplacian~$\Delta$ on the space of $K$-spherical functions. In general, the expression for the Laplacian on a type I supergroup can be related to its bosonic counterpart~$\Delta_0$ provided that one works in a particular coordinate system that we will introduce in the second subsection. In these coordinates, $\Delta$ differs from~$\Delta_0$ by a nilpotent term~$A$ and a~simple ``trace'' term.

Next, we will analyse the eigenfunctions of the Laplacian. The term $A$ will be treated as a perturbation, so that the eigenfunctions of $\Delta$ are obtained from those of $\Delta_0$ by means of (quantum mechanical) perturbation theory that terminates at a finite order. This procedure will be explained in the third subsection. The eigenproblem of~$\Delta_0$ itself will be reviewed in the first subsection. It reduces to a two-particle Schr\"odinger problem that, in the case of scalar fields, coincides with the $BC_2$ Calogero--Sutherland system,~\cite{Schomerus:2016epl}. For spinning fields, the equation is already non-trivial but its solutions, the spinning bosonic partial waves, have been extensively studied in the literature. Thus, the upshot of the whole construction is to obtain superconformal blocks in a systematic way from well known functions.

Finally, as in the previous sections, we will apply the general theory in the $\mathfrak{sl}(2|1)$ example. For a more involved example in four dimensions, the reader is referred to~\cite{N1D4_paper}.

\subsection{Casimir equations and Calogero--Sutherland models}\label{section5.1}

The Laplace--Beltrami operator $\Delta$ on a Lie group $G$ may be constructed as the quadratic Casimir build out of left-invariant vector fields. Alternatively, one may use right-invariant vector fields. The two operators obtained in this way coincide. The Laplacian is a second order differential operator acting on the algebra of functions~$C^\infty(G)$. More generally, $\Delta$ can act on vector-valued functions component-wise.

As invariant vector fields form a representation of the Lie algebra $\mathfrak{g} = {\rm Lie}(G)$, the Laplacian commutes with them. It follows that under $\Delta$ the space of $K$-spherical functions is mapped to itself. The space of these functions will be denoted by
\begin{gather*}
 \Gamma^G_{V_l,V_r} = \big\{ F\colon G\xrightarrow{}V_l\otimes V_r\, |\, F(k_l g k_r)= \big(\rho_l(k_l)\otimes\rho_r(k_r)^{-1}\big) F(g)\big\}.
\end{gather*}
Thus, $\Gamma$ is specified by two finite-dimensional representations $\rho_l$, $\rho_r$ of $K$, on spaces $V_l$, $V_r$. Due to the Cartan decomposition $G=KAK$, any function in $\Gamma_{V_l,V_r}$ is uniquely determined by the values it assumes on the two-dimensional abelian group~$A$. Not every function $f\colon A\xrightarrow{}V_l\otimes V_r$ can be extended to a $K$-spherical function, because of the non-uniqueness of the decomposition. Only functions which take values in the space of invariants $(V_l\otimes V_r)^B$ admit consistent extensions.

In any coordinate system on $G$, we can find the expression for the Laplacian, e.g., by first computing the Maurer--Cartan form ${\rm d}g \, g^{-1} = {\rm d}x_a\, C_{ab} X^b$. The right invariant vector fields are then
\begin{equation*}
 \mathcal{R}_a = \mathcal{R}_{X^a} = C^{-1}_{ab}\partial_b .
\end{equation*}
Here $a=1,\dots,\dim \mathfrak{g}$ runs over the basis $\{X^a\}$ of the conformal Lie algebra and $(x_a)$ are any local coordinates on~$G$. The Laplacian is then found with the help of the Killing form~$K^{ab}$ as $\Delta = K^{ab}\mathcal{R}_a\mathcal{R}_b$.

Working in Cartan coordinates makes it particularly easy to restrict~$\Delta$ to the space of $K$-spherical functions. The resulting operator may be regarded as acting on the space of functions
\begin{equation*}
 \Gamma^A_{V_l,V_r} = \big\{ f \colon A \xrightarrow{} (V_l\otimes V_r)^B \big\} .
\end{equation*}
The Laplacian is self-adjoint with respect to the scalar product on $L^2(G)$ that uses the Haar measure. One has to take this into account in order to obtain an operator of the Schr\"odinger form on~$\Gamma^A$. This is achieved by conjugation with a scalar factor~$\omega$~\cite{Buric:2019dfk}
\begin{equation*}
 \omega(u_1,u_2) = 4(-1)^{2-d}\left(\sinh\frac{u_1}{2} \sinh\frac{u_2}{2}\right)^{2d-2}\coth\frac{u_1}{2}\coth\frac{u_2}{2} \left|\sinh^{-2}\frac{u_1}{2}-\sinh^{-2}\frac{u_2}{2}\right|^{d-2} .
\end{equation*}
More precisely, we set
\begin{equation*}
 H_{\rho_l,\rho_r} = 2\omega^{1/2} \Delta_A \omega^{-1/2} - \frac14 (d-1)^2 .
\end{equation*}
The reader is referred to~\cite{Schomerus:2017eny} for details, which we will not need in the following. To finish this subsection, we quote the Hamiltonian that is obtained in the above process in the case of four scalar fields.
The Hamiltonian~$H$ takes the form
\begin{equation}\label{Hamiltonian-reduction}
 H_{\rho_l,\rho_r} = - \frac{\partial^2}{\partial u^2_1} - \frac{\partial^2}{\partial u^2_2} + V^{\rm CS}_{\rho_l,\rho_r}(u_1,u_2),
\end{equation}
where $V$ is the potential that depends on the representations $\rho_l,\rho_r$ defining covariance laws as
\begin{gather*}
V_{\rm CS}^{\hat \pi,s} (u_i)= V^{(a,b,\epsilon)}_{\rm CS}(u_i) = V_{\rm PT}^{(a,b)}(u_1)+V_{\rm PT}^{(a,b)}(u_2)+\frac{\epsilon(\epsilon-2)} {8\sinh^2\frac{u_1-u_2}{2}}+\frac{\epsilon(\epsilon-2)}{8\sinh^2\frac{u_1+u_2}{2}},\\
V_{\rm PT}^{(a,b)}(u)=\frac{(a+b)^2-\frac{1}{4}}{\sinh^2u}-\frac{ab}{\sinh^2\frac{u}{2}} .
\end{gather*}
Here the parameters $a$ and $b$ are conformal weights of $\rho_l$, $\rho_r$, respectively, and $\epsilon= d-2$. The one-dimensional potential $V_{\rm PT}$ is known as the P\"oschl--Teller potential and the Hamiltonian~(\ref{Hamiltonian-reduction}) is that of the $BC_2$ Calogero--Sutherland system. The derivation of~(\ref{Hamiltonian-reduction}) along with the computation of the potentials for a number of examples can be found in \cite{Buric:2019rms, Schomerus:2017eny,Schomerus:2016epl}.

\subsection{Laplacian on type I supergroups}\label{section5.2}

Having a good control over bosonic Casimir equations, we now move to the super-case. For the precise meaning of phrases such as ``supergroup'' or ``space of functions on a supergroup'' the reader is referred to appendices. In the following we will use such phrases in a somewhat loose, but hopefully clear, way.

As a vector space, the space of functions on a supergroup $G$ is isomorphic to the space of vector valued function on the underlying Lie group
\begin{equation*}
 C^\infty(G)\cong C^\infty\big(G_{(0)},\Lambda\mathfrak{g}_{(1)}^\ast\big) .
\end{equation*}
The vector space in which the functions take values is dual to the exterior algebra on the odd part of $\mathfrak{g}={\rm Lie}(G)$. This corresponds to the expansion of a function in Grassmann coordinates. Similarly, vector valued functions on $G$ may be regarded as function from the bosonic group~$G_{(0)}$ to the space tensored with $\Lambda\mathfrak{g}_{(1)}^\ast$
\begin{equation*}
 C^\infty(G,V)\cong C^\infty\big(G_{(0)},V\otimes\Lambda\mathfrak{g}_{(1)}^\ast\big) .
\end{equation*}
The Laplacian on the supergroup commutes with left and right invariant vector fields and therefore acts within the space of $K$-spherical functions. There is a simple relation between the Laplacian on~$G$ and its bosonic counterpart on~$G_{(0)}$ when the superalgebra~$\mathfrak{g}$ is of type~I. To review this relation, we consider a slight modification of the Cartan coordinates that we analysed in the previous section (see~\cite{Quella:2007hr})
\begin{equation} \label{param}
g = \eta'_l k_l a k_r \eta_r' = {\rm e}^{\bar\sigma^a \bar Y_a} k_l a k_r {\rm e}^{\sigma^a Y_a} .
\end{equation}
That is, we have commuted the factors that include fermionic coordinates past the factors $k_l, k_r\in K$ to place them on the furthest left and right positions. That $k_l$ and $k_r$ remain unchanged in this process follows from the Baker-Campbell-Hausdorff formula. The elements~$\bar Y_a$ form a basis of $\mathfrak{g}_-$ with the index $a$ running through $a = 1, \dots, \dim \mathfrak{g}_-$. Elements of the dual basis in~$\mathfrak{g}_+$ are denoted by~$Y_a$. When written in terms of the supercharges and special superconformal transformations, the exponents read
\[ \bar\sigma^a \bar Y_a = \bar\sigma^\beta_q Q^-_\beta + \bar \sigma_s^\beta S^-_\beta
, \qquad \sigma^a Y_a = \sigma^\beta_q Q^+_\beta + \sigma_s^\beta S^+_\beta . \]
Here $Q^\pm_\beta$ is a basis of $\mathfrak{q}_\pm$ and $S^\pm_\beta$ is a basis of
$\mathfrak{s}_\pm$ so that $\beta$ runs through $\beta = 1, \dots, \dim
\mathfrak{g}_{(1)}/4$. Since we have moved the fermionic generators from $\mathfrak{g}_-$into the leftmost factor, left translations with elements $k\in K$ act on the corresponding Grassmann coordinates.
Hence, the covariance laws satisfied by $K$-spherical functions do not hold component-wise after the expansion in Grassmann variables. They mix various components of the vector-valued functions. The mixing can be expressed by saying that the expanded $K$-spherical functions on $G$ naturally correspond to $K$-spherical functions on the bosonic group~$G_{(0)}$, belonging to the space $\Gamma^{G_{(0)}}_{V_l ,V_r}$ with
\begin{equation}\label{super-reps}
 V_l = V_{(12)} \otimes \Lambda\mathfrak{g}_-^\ast , \qquad V_r = V_{(34)}\otimes\Lambda\mathfrak{g}_+^\ast.
\end{equation}
The most important property of the coordinates defined above is the form that the Laplacian assumes in them. Namely, we have \cite{Gotz:2006qp, Quella:2007hr}
\begin{equation*}
 \Delta = \Delta_0 - 2 D^{ab}\bar \partial_{\sigma^a} \partial_{\bar\sigma^b} - K_{ij} \operatorname{tr}\big(D\big(X^i\big)\big)\mathcal{R}^{(0)}_{X^j} .
\end{equation*}
Here, $\Delta_0$ is the Laplacian on the bosonic subgroup of $G$. For a superconformal group this Casimir of the bosonic subgroup receives a very simple correction: a term that involves only second order derivatives of fermionic coordinates with bosonic coefficients. The coefficients $D^{ab}$ are matrix elements of the representation $D$ of the bosonic group $G_{(0)}$ on the space $\mathfrak{g}_+$, restricted to the section
$A = A_{(0)} \subset G_{(0)}$ of the bosonic conformal group. Since the functions $D^{ab}$ depend only on the bosonic coordinates, the correction term is a nilpotent operator. This is clear upon expansion of functions in Grassmann variables. In the final term, $\{X^i\}$ denotes any basis of the even subalgebra $\mathfrak{g}_{(0)}$ and~$K_{ij}$ the Killing form in this basis. This term has only one non-zero contribution, coming from the $U(1)$ $R$-charge generator, which is the only one that is not traceless in the representation~$D$.

The process of reduction proceeds along the same lines as explained in the previous section. Therefore, we end up with a matrix-valued Hamiltonian
\begin{equation*}
 H = H'_0 + A .
\end{equation*}
Here, $H_0$ is the spinning bosonic Hamiltonian specified by representations~(\ref{super-reps}). It is modified to~$H'_0$ by addition of constants along the diagonal, which come from the trace term. Finally, $A$~is a nilpotent matrix obtained from the reduction of the second term.

\subsection{Nilpotent perturbation theory}\label{section5.3}

Having seen that for type I superconformal symmetry the Casimir operator differs form the spinning bosonic one by a nilpotent piece $A$, our strategy is to treat~$A$ as a perturbation and construct supersymmetric partial waves as a perturbation of spinning partial waves. Since~$A$ is nilpotent, the process will produce exact results at some finite order $N\leq\dim \mathfrak{g}_{(1)}/2$. General methods to solve for eigenfunctions of a Hamiltonian $H= H_0 + A$ in terms of those of $H_0$ are well established. Particularly suited for our purposes is the exposition of Messiah,~\cite{messiah1962quantum}, that we now review. For simplicity, we assume that~$H$ and $H_0$ have discrete spectra and finite dimensional eigenspaces. By a limiting process, the construction can be extended to more general spectra.

Let us first set up a bit of notation. We will write $H_{[0]}$ to mean either $H_0$ or $H$ in order to avoid cluttering. Similar remarks apply to all objects that will carry such indices. The Hilbert space on which the operators act is denoted by $\mathcal{H}$ and $H_0$ is assumed to be hermitian. We shall denote the eigenspaces of $H_{[0]}$ by $V^{[0]}_n$ and the corresponding eigenvalues by $\varepsilon^{[0]}_n$. Projectors to these eigenspaces are written as $P^{[0]}_n$. Consider the resolvents
\begin{equation*}
G_{[0]}\colon \ \mathbb{C}\xrightarrow{}L(\mathcal{H}) ,\qquad G_{[0]}(z) = \big(z-H_{[0]}\big)^{-1}.
\end{equation*}
They can be expanded in the projectors $P_n^{[0]}$ with simple poles at the eigenvalues $\varepsilon_n^{[0]}$ of $H_{[0]}$. Conversely, the projectors are obtained by picking up the residues of resolvents at the position of eigenvalues
\begin{equation*}
G_{[0]}(z) = \sum_n \frac{1}{z-\varepsilon^{[0]}_n} P^{[0]}_n , \qquad P^{[0]}_n = \frac{1}{2\pi i}\oint_{\Gamma_n} G_{[0]}(z)\, {\rm d}z.
\end{equation*}
Here $\Gamma_n$ is a small contour encircling $z=\varepsilon^{[0]}_n$ and none of the other eigenvalues.

Let us insert the relation $H = H_0 + A$ between the two Hamilton operators into the resolvent~$G$. Upon expanding in~$A$ we get
\begin{equation*}
G = G_0 \sum_{n=0}^{\infty} (A G_0)^n = G_0 \sum_{n=0}^{N} (A G_0)^n .
\end{equation*}
The infinite sum truncates at a finite order for the kind of operators we wish to consider. For, if $A^N = 0$ then also $(AG_0)^N=0$, since in our application $G_0$ acts diagonally on $\mathcal{H} = L^2(\mathbb{C}^m)\otimes\mathbb{C}^l$ and $A$ is a triangular matrix of functions. Computing residues of the previous expansion for $G$ at $\varepsilon^0_i$ we obtain
\begin{equation}
P_i = P^0_i + \sum_{n=1}^{N} {\rm Res}\big(G_0 (AG_0)^n,\varepsilon^0_i\big)\equiv P^0_i + P^{(1)}_i + \dots + P^{(N)}_i, \label{proj}
\end{equation}
with
\begin{gather}
P^{(1)}_i = P^0_i A S_i + S_i A P^0_i , \label{eq:P1}\\
P^{(2)}_i = P^0_i A S_i A S_i + S_i A P^0_i A S_i + S_i A S_i A P^0_i\nonumber\\
\hphantom{P^{(2)}_i =}{} -P^0_i A P^0_i A S_i^2 - P^0_i A S_i^2 A P^0_i - S_i^2 A P^0_i A P^0_i,\label{eq:P2}
\end{gather}
and so on. Here, the symbol $S_i$ denotes the following operator
\begin{equation*}
S_i = \sum_{j\neq i}\frac{P^0_j}{\varepsilon^0_i - \varepsilon^0_j}. 
\end{equation*}
Since the sum over $j$ is restricted to $j \neq i$ we infer that $S_i P^0_i = P^0_i S_i = 0$, a property we shall frequently use. The idea now is to find the eigenvectors of $H$ by applying the projections $P_i$ to eigenvectors of $H_0$. This leads to the complete solution of the problem provided that the maps, $P_i\colon V^0_i\xrightarrow{}V_i$ and $P^0_i\colon V_i\xrightarrow{}V^0_i$ are vector space isomorphisms. We shall verify that this assumption is true in the example below.

\subsection{Example}\label{section5.4}

In this final subsection, we will apply the nilpotent perturbation theory to find the partial waves for the $\mathcal{N}=2$ superconformal symmetry in one dimension. The even subalgebra of the superconformal algebra $\mathfrak{g} = \mathfrak{sl}(2|1)$ is $\mathfrak{g}_{(0)}=\mathfrak{so}(1,2)\oplus \mathfrak{u}(1)$. Its representations $[j,q]$ are labelled by a spin $j$ and an $R$-charge~$q$. For finite dimensional (non-unitary) representations, $j$ is half-integer while $q$ can be any complex number. We see that the odd subspace $\mathfrak{g}_{(1)}$ decomposes into a sum of two irreducible representations,
\begin{gather*}
 \mathfrak{g}_{(1)} = \mathfrak{g}_+ \oplus \mathfrak{g}_- = [1/2,1]\oplus [1/2,-1] .
\end{gather*}
When we restrict the representations $\mathfrak{g}_\pm$ to the subalgebra $\mathfrak{k} = \mathfrak{u}(1)_D\oplus\mathfrak{u}(1)_R$, they decompose into a sum of two irreducibles each
\begin{equation*}
\mathfrak{g}_+ = \mathfrak{q}_+ \oplus \mathfrak{s}_+  , \qquad \mathfrak{g}_- = \mathfrak{q}_- \oplus \mathfrak{s}_- ,
\end{equation*}
where
\begin{equation*}
\mathfrak{q}_\pm = (1/2,\pm 1) , \qquad \mathfrak{s}_\pm = (-1/2,\pm 1) .
\end{equation*}
The first label of the representation is the conformal weight $\Delta$, while the second is the $R$-charge~$q$. Recall that $\mathfrak{q}_\pm$ are the spaces spanned by $Q_\pm$, respectively, and
the same for $\mathfrak{s}_\pm$. In our analysis of the Casimir equations it is important to know the representation content of $\Lambda\mathfrak{g}_\pm$ which is given by
\begin{equation*}
\Lambda\mathfrak{g}_\pm = (0,0) \oplus (1/2,\pm 1) \oplus (-1/2,\pm 1) \oplus (0,\pm 2) .
\end{equation*}
The Lie superalgebra $\mathfrak{sl}(2|1)$ possesses two algebraically independent Casimir elements, one of second order and one of third. The quadratic Casimir element is given by
\begin{equation}
C_2=-D^2+\tfrac14 R^2 - \tfrac12 \{K,P\} +\tfrac12 [Q_+,S_-] - \tfrac12 [Q_-,S_+] . \label{e1}
\end{equation}
The cubic Casimir element, on the other hand, reads
\begin{gather*}
C_3  = \big(D^2-\tfrac14 R^2+PK\big)R - Q_+ S_- \big(D + \tfrac32 R\big)\\
\hphantom{C_3  =}{} - Q_-S_+ \big(D-\tfrac32 R\big)+ K Q_+ Q_- + P S_-S_+ - D - \tfrac12 R .
\end{gather*}
Typical representations of the superalgebra $\mathfrak{sl}(2|1)$ can be distinguished by the values of these two Casimir elements. For atypical representation (short multiplets) this is not the case. These representations have
both Casimirs are zero, see, e.g.,~\cite{Scheunert:1976wj}.

As explained in the previous subsection, we shall deviate from the Cartan coordinates and parametrise the supergroup as
\begin{equation*}
g = {\rm e}^{\bar\varrho Q_- + \bar\sigma S_-} {\rm e}^{\kappa R} {\rm e}^{\nu_1 D} {\rm e}^{\mu (P-K)} {\rm e}^{\nu_2 D}{\rm e}^{\varrho Q_+ - \sigma S_+} . 
\end{equation*}
We also used the generator $P-K$ in the middle factor in order to have discrete spectrum for which our discussion the perturbation theory readily applies. Solutions that we will obtain can be then analytically continued in parameters as well as the argument, as we will later show.

We can now perform the steps explained in the previous subsections to find the Laplacian and descend to the double coset $K\backslash G_{(0)} /K$. Since the algebra $\mathfrak{k}$ is abelian, the spaces~$V_i$ and hence also $V_{(12)}$ and $V_{(34)}$ are all one-dimensional. Recall that the Laplacian acts on a space of functions that take values in $B$-invariants. In the case at hand, $B$ coincides with the $R$-symmetry group~$U(1)$. We will assume that the $R$-charges $q_i$ of the four external fields sum up to $\sum q_i = 0$. Under this assumption, the space of $B$-invariant is 6-dimensional and spanned by
\begin{equation*}
\left(\Lambda \mathfrak{g}_{(1)}\otimes V_{(12)} \otimes V_{(34)}\right)^B
= \operatorname{span}\{1,\bar\sigma \sigma, \bar\sigma \varrho,\bar\varrho \sigma,\bar\varrho
\varrho,\bar\sigma \bar\varrho\sigma \varrho\} .
\end{equation*}
Each function on the one-dimensional coset space $K \backslash G_{(0)} /K$ that takes values
in this subspace can extended to a covariant function $f$ on the entire supergroup as
\begin{gather*}
f (\mu,\kappa,\nu,\sigma,\varrho) =
{\rm e}^{a\nu_1 + b\nu_2 + q\kappa} f_1 + {\rm e}^{(a+\frac12)\nu_1 + (b-\frac12)\nu_2 + (q+1)\kappa}f_{2}\, \bar\sigma\sigma \\
\hphantom{f (\mu,\kappa,\nu,\sigma,\varrho) =}{}
+ {\rm e}^{(a+\frac12)\nu_1 + (b+\frac12)\nu_2 + (q+1)\kappa}f_{3}  \bar\sigma\varrho
 + {\rm e}^{(a-\frac12)\nu_1 + (b-\frac12)\nu_2 + (q+1)\kappa}f_{4} \bar\varrho\sigma\\
\hphantom{f (\mu,\kappa,\nu,\sigma,\varrho) =}{}
  +{\rm e}^{(a-\frac12)\nu_1 + (b+\frac12)\nu_2 + (q+1)\kappa}f_{5} \bar\varrho\varrho + {\rm e}^{a\nu_1 + b\nu_2 + (q+2)\kappa} f_6 \bar\sigma\bar\varrho\sigma\varrho,
\end{gather*}
where the six real component functions $f_1,\dots,f_6$ depend on the variable $\mu$ that parametrises the double coset.

The behaviour of the individual terms under the left and right action of $K$ is determined by the parameters $(\Delta_i,q_i)$ of the external fields. Their values are $a = \Delta_2-\Delta_1$, $b= \Delta_3-\Delta_4$ and $q =q_1+q_2 = -q_3-q_4$. The precise form of the $\nu_i$ and $\kappa$-dependent prefactor depends on the fermionic coordinates they are multiplied with. The first term in the expansion above, one that contains no fermionic coordinates, is multiplied by the character of $K \times U(1)_D$ on $V_{(12)} \times V_{(34)}$ where $U(1)_D$ denotes the $U(1)$ subgroup of the right factor $K$ that is associated with dilation. In the remaining terms, this basic character is multiplied with the character of $K \times U(1)_D$ on the corresponding product of fermionic variables.

The Laplace--Beltrami operator is obtained by substituting explicit expressions for the left or right invariant vector fields in the quadratic Casimir $(\ref{e1})$. Applied to the function $f$ of the above form, it reduces to a second order differential operator in $\mu$ that acts on the vector $(f_1(\mu), \dots, f_6(\mu))$. The corresponding eigenvalue problem assumes the form of a matrix Schr\"odinger equation $H f = \lambda f$, with the Hamiltonian of the form $H = H_0 + A$ and
\begin{gather*} 
H_0 = \operatorname{diag} \bigg( H_{\rm PT}^{(a,b)}-\frac{(q-1)^2}{4} , H_{\rm PT}^{(a+\frac12,b-\frac12)}-\frac{q^2}{4} , H_{\rm PT}^{(a+\frac12,b+\frac12)}-\frac{q^2}{4},\\
\hphantom{H_0 = \operatorname{diag} \bigg(}{} H_{\rm PT}^{(a-\frac12,b-\frac12)}-\frac{q^2}{4},H_{\rm PT}^{(a-\frac12,b+\frac12)}-\frac{q^2}{4},H_{\rm PT}^{(a,b)}-\frac{(q+1)^2}{4}\bigg),
\end{gather*}
and a nilpotent perturbation
\begin{equation*}
A=
\begin{pmatrix}
0 & -\sin\mu & \cos\mu & -\cos\mu & -\sin\mu & 0\\
0 & 0 & 0 & 0 & 0 & \sin\mu\\
0 & 0 & 0 & 0 & 0 & -\cos\mu \\
0 & 0 & 0 & 0 & 0 & \cos\mu\\
0 & 0 & 0 & 0 & 0 & \sin\mu\\
0 & 0 & 0 & 0 & 0 & 0
\end{pmatrix}.
\end{equation*}
The unperturbed Hamiltonian $H_0$ contains six individual Hamiltonians $H_{\rm PT}^{(\alpha,\beta)}$ with a P\"oschl--Teller potential,
\begin{equation*}
H_{\rm PT}^{(\alpha,\beta)} = - \frac{1}{4}\partial_\mu^2 - \frac{\alpha\beta}{\sin^2 \mu} +
\frac{(\alpha+\beta)^2-\frac{1}{4}}{\sin^2 2\mu}.
\end{equation*}
The constants added to these P\"oschl--Teller Hamiltonians come from the trace term in the Laplacian $\Delta$. Let us now apply the nilpotent perturbation theory to solve the eigenvalue problem for $H$ in the case $a=b=q=0$. We have $A^3 = 0$, so the perturbation theory is exact at the second order. We shall solve the problem on the interval $\mu\in[0,\pi/2]$. The potential diverges at the boundaries and hence the spectrum is discrete.

According to our general discussion, we first need to spell out the solution of the unperturbed problem, i.e., provide the eigenfunctions of the P\"oschl--Teller Hamiltonians that appear along diagonal of $H_0$. Each of the operators
\[
H_{\rm PT}^{(0,0)}-1/4 ,\qquad H_{\rm PT}^{(1/2,-1/2)} = H_{\rm PT}^{(-1/2,1/2)} ,\qquad H_{\rm PT}^{(1/2,1/2)}= H_{\rm PT}^{(-1/2,-1/2)}
\]
has a unique eigenfunction that is non-singular on the above interval. They will be denoted by~$\psi_n$,~$\phi_n$ and~$\chi_n$ with $n = 0,1, \dots $ integer, respectively. Explicitly, we have
\begin{alignat*}{3}
& \psi_n = \sqrt{2(2n+1)}\sin^{1/2}\mu \cos^{1/2}\mu\ P_n(\cos2\mu) ,\qquad && \varepsilon^{0}_{0,n} = n(n+1),& \\
& \phi_n = 2\sqrt{n+1}\sin^{3/2}\mu \cos^{1/2}\mu\ P_n^{(1,0)}(\cos2\mu),\qquad && \varepsilon_{1,n}^0 = (n+1)^2,& \\
& \chi_n = 2\sqrt{n+1}\sin^{1/2}\mu \cos^{3/2}\mu\ P_n^{(0,1)}(\cos2\mu),\qquad && \varepsilon_{1,n}^0 = (n+1)^2.&
\end{alignat*}
Here, $P_n^{(\alpha,\beta)}$ denote Jacobi polynomials, $P_n = P_n^{(0,0)}$ are the Legendre polynomials. With this normalisation, each set of wave functions forms an orthonormal basis for the space of functions defined on the interval $[0,\pi/2]$ which vanish on the boundary, with respect to the usual scalar product,
\[ (g_1, g_2) = \int_0^{\frac{\pi}{2}} {\rm d}\mu\, g_1 (\mu) \bar g_2 (\mu),\]
for which the P\"oschl--Teller Hamiltonians are Hermitian. When we displayed the eigenvalues~$\varepsilon$ of the P\"oschl--Teller Hamiltonians we have already introduced the label $ i =
(\sigma,n), \sigma = 0,1$ that enumerates the various eigenspaces of the unperturbed Hamiltonian~$H_0$. We can now also display the associated projectors $P^0_i = P^0_{\sigma,n} = P^0_{\varepsilon_{\sigma,n}}$. They are given by
\begin{gather*}
 P^0_{n(n+1)} f = (\psi_n,f_1)\psi_n e_1 + (\psi_n,f_6)\psi_n e_6,\\
 P^0_{(n+1)^2}f = (\phi_n,f_2)\phi_n e_2 + (\phi_n,f_5)\phi_n e_5 + (\chi_n,f_3)\chi_n e_3 + (\chi_n,f_4)\chi_n e_4,
\end{gather*}
where $\{e_i\}$ is the standard orthonormal basis for $\mathbb{C}^6$ and $f=(f_1,\dots,f_6)^T$ is a six component column of functions in~$\mu$. In order to find eigenvectors of~$H$ we first need to compute the projectors~$P_i$. To do this, we need the following two integrals
\begin{gather*}
 I_1(m,n) = \int_0^{\pi/2} {\rm d}\mu\, \phi_m \psi_n\sin\mu = \sqrt{\frac{m+1}{2(2n+1)}}(\delta_{mn}-\delta_{m+1,n}),\\
 I_2(m,n) = \int_0^{\pi/2} {\rm d}\mu\, \chi_m \psi_n\cos\mu = \sqrt{\frac{m+1}{2(2n+1)}}(\delta_{mn}+\delta_{m+1,n}).
\end{gather*}
To evaluate the integrals, we performed the substitution to a new variable $x=\cos2\mu$ that takes values in $x\in[-1,1]$ and used the relations
\begin{equation*}
(1-x) P_n^{(1,0)} = P_n- P_{n+1} , \qquad (1+x) P_n^{(0,1)} = P_n + P_{n+1},
\end{equation*}
along with the orthogonality of Legendre polynomials. These results imply that
\begin{equation*}
P^0_i A P^0_i = 0,\qquad P^0_i A S_i A P^0_i=0 .
\end{equation*}
Here, the index $i = (\sigma,n)$ runs over $\sigma = 0,1$ and $n = 0,1,2, \dots$. To get the eigenvectors of $H$ all we have to do is to apply the projectors $P_i$ to $|\psi\rangle$. Using equation \eqref{proj} and the expressions~\eqref{eq:P1} and~\eqref{eq:P2} for $P^{(1)}_i$, $P^{(2)}_i$ we obtain the following set of linearly independent eigenfunctions of the perturbed Hamiltonian, i.e., the Laplacian on the supergroup,
\begin{gather*}
f^{(1)}_n = \psi_n e_1,\\
f^{(2)}_n = \phi_n e_2 - \frac{1}{\sqrt{2(n+1)(2n+1)}} \psi_n e_1 - \frac{1}{\sqrt{2(n+1)(2n+3)}} \psi_{n+1} e_1,\\
f^{(3)}_n = \chi_n e_3 + \frac{1}{\sqrt{2(n+1)(2n+1)}} \psi_n e_1 - \frac{1}{\sqrt{2(n+1)(2n+3)}} \psi_{n+1} e_1,\\
f^{(4)}_n = \chi_n e_4 - \frac{1}{\sqrt{2(n+1)(2n+1)}} \psi_n e_1 + \frac{1}{\sqrt{2(n+1)(2n+3)}} \psi_{n+1} e_1,\\
f^{(5)}_n = \phi_n e_5 - \frac{1}{\sqrt{2(n+1)(2n+1)}} \psi_n e_1 - \frac{1}{\sqrt{2(n+1)(2n+3)}} \psi_{n+1} e_1,\\
f^{(6)}_n = \psi_n e_6 + \frac{1}{\sqrt{2n(2n+1)}} (\phi_{n-1}(-e_2-e_5) +\chi_{n-1}(-e_3+e_4)) \\
\hphantom{f^{(6)}_n =}{} -\frac{1}{\sqrt{2(n+1)(2n+1)}} (\phi_n (e_2+e_5) +\chi_n (-e_3+e_4)) + \frac{2}{n(n+1)} \psi_n e_1.
\end{gather*}
Note that the superscript $(k)$ labels different solutions of our matrix Schr\"odinger equation. Each of the eigenfunctions $f^{(k)}$ has six components.

Let us make a couple of remarks about the obtained set of eigenfunctions. By completeness of eigenfunctions of each P\"oschl--Teller Hamiltonian, the eigenfunctions of $H_0$ are also complete in the Hilbert space of physical wave functions. However, the solution $f^{(6)}_n$ is not well-defined for $n=0$ and it is therefore discarded. Indeed, the perturbed Hamiltonian is seen to be no longer diagonalizable on the full Hilbert space, but it is diagonalizable on a codimension-one subspace. Non-diagonalizability is a known feature of the Laplacian on supergroups, \cite{Saleur:2006tf, Schomerus:2005bf}, and is related to the presence of atypical modules in the decomposition of the regular representation. In our case, as mentioned above, atypical (short) representations can appear only for eigenvalue zero, consistent with the findings here. In the conformal field theory language, the number of conformal blocks reduces when the field in the intermediate channel is BPS.

Conformal partial waves satisfy the same differential equations as the wavefunctions above, but different boundary conditions. We opted to work with the physical wavefunctions of the Schr\"odinger problem on a compact interval because the operator~$H_0$ was in this case manifestly Hermitian and had a discrete spectrum. Therefore, we could directly apply the perturbative procedure of the previous subsection. This means however, that a few further steps are needed to obtain the conformal blocks. Firstly, the solutions of the trigonometric model have to be adopted to the hyperbolic theory. This is done by expressing Jacobi polynomials in terms of the hypergeometric function $_2 F_1$, which allows to promote $n$ to a continuous parameter $\lambda$. Explicitly, let
\begin{equation*}
\Psi^{(a,b)}_\lambda = \left(\frac{4}{y}\right)^{a+\frac12}(1-y)^{\frac12 a-\frac12 b+\frac14} _2F_1\left(\frac12+a+\lambda,\frac12+a-\lambda,1+a-b,\frac{y-1}{y}\right),
\end{equation*}
where the variable $y$ is related to $u = 2{\rm i} \mu$ as $y = \cosh^{-2} \frac{u}{2}$. For our special values of the parameters $a$, $b$ we introduce in particular
\begin{equation*}
\Psi_\lambda = ({\rm i}\lambda)^{1/2} \Psi^{(0,0)}_\lambda ,\qquad
\Phi_\lambda = \frac12 ({\rm i}\lambda)^{3/2} \Psi^{(\frac12,-\frac12)}_\lambda ,\qquad
X_\lambda = \frac12 ({\rm i}\lambda)^{1/2} \Psi^{(\frac12,\frac12)}_\lambda. 
\end{equation*}
When these functions are specialised to integer or half-integer values of $\lambda$, we get the building blocks of the solution for the trigonometric model, more precisely
\begin{equation*}
\Psi_{\lambda=-n-\frac12} = \psi_{n} ,\qquad \Phi_{\lambda=-n-1} = \phi_{n} , \qquad X_{\lambda=-n-1} = \chi_{n} .
\end{equation*}
With this in mind, we define functions $F^{(i)}_\lambda$, $i=1,\dots,6$ by analytic continuation in $\lambda$ of the solutions $f^{(i)}_n$. These are solutions of the matrix Calogero--Sutherland model that are regular near the wall at $u=0$, in which incoming and outgoing waves are superposed in a very particular way. To extract the incoming and outgoing pieces, we decompose each $\Psi^{(a,b)}_\lambda$ as $\Psi^{(a,b)}_\lambda = \Psi^{(a,b)}_{\lambda,+} + \Psi^{(a,b)}_{\lambda,-}$, where
\begin{equation*}
\Psi^{(a,b)}_{\lambda,\pm} =c(\pm\lambda,a,b) 4^{\pm\lambda} (1-y)^{\frac12 a-\frac12 b +\frac14} y^{\mp \lambda}\, {} _2 F_1\left(\frac12+a\mp\lambda,\frac12-b\mp\lambda,1\mp2\lambda,y\right),
\end{equation*}
and the prefactor is $c$ given by
\begin{equation*}
c(\lambda,a,b) = 4^{-\lambda+a+\frac12} \frac{\Gamma(a-b+1) \Gamma(2\lambda)}{\Gamma\big(\frac12+\lambda+a\big)
	\Gamma\big(\frac12+\lambda-b\big)}.
\end{equation*}
Thus, the wavefunctions $F^{(i)}$ give two families of solutions which are obtained by expressing~$F^{(i)}_\lambda$ in terms of $\Psi^{(a,b)}_\lambda$ and attaching an index $+$ (respectively $-$) to them. For $\lambda,\varepsilon>0$, the set of solutions which decay at infinity is
\begin{gather*}\label{solutions}
G^{(1)}_{\lambda} = ({\rm i}\lambda)^{1/2} \Psi^{(0,0)}_{\lambda,-} e_1, \\
G^{(2)}_{\lambda} = \frac12 ({\rm i}\lambda)^{3/2} \Psi^{(\frac12,-\frac12)}_{\lambda,-} e_2 +
\sqrt{\frac{{\rm i}}{4\lambda}} \Psi^{(0,0)}_{\lambda+\frac12,-} e_1 + \sqrt{\frac{{\rm i}}{4\lambda}}
\Psi^{(0,0)}_{\lambda-\frac12,-} e_1 , \\
G^{(3)}_{\lambda} = \frac12 ({\rm i}\lambda)^{1/2} \Psi^{(\frac12,\frac12)}_{\lambda,-} e_3 -
\sqrt{\frac{{\rm i}}{4\lambda}} \Psi^{(0,0)}_{\lambda+\frac12,-} e_1 + \sqrt{\frac{{\rm i}}{4\lambda}}
\Psi^{(0,0)}_{\lambda-\frac12,-} e_1, \\
G^{(4)}_{\lambda} = \frac12 ({\rm i}\lambda)^{1/2} \Psi^{(\frac12,\frac12)}_{\lambda,-} e_4 +
\sqrt{\frac{{\rm i}}{4\lambda}} \Psi^{(0,0)}_{\lambda+\frac12,-} e_1 - \sqrt{\frac{{\rm i}}{4\lambda}}
\Psi^{(0,0)}_{\lambda-\frac12,-} e_1,\\
G^{(5)}_{\lambda} = \frac12 ({\rm i}\lambda)^{3/2} \Psi^{(\frac12,-\frac12)}_{\lambda,0} e_5 +
\sqrt{\frac{{\rm i}}{4\lambda}} \Psi^{(0,0)}_{\lambda+\frac12,-} e_1 + \sqrt{\frac{{\rm i}}{4\lambda}}
\Psi^{(0,0)}_{\lambda-\frac12,+} e_1, \\
G^{(6)}_{\lambda} = ({\rm i}\lambda)^{1/2}\Psi^{(0,0)}_{\lambda,-} e_6 + \frac{1}{4\sqrt{\lambda}} \left({\rm i}^{\frac32}\left(\lambda+\frac12\right)\Psi^{(\frac12,-\frac12)}_{\lambda+\frac12,-} (e_2+e_5)
+ {\rm i}^{\frac12} \Psi^{(\frac12,\frac12)}_{\lambda+\frac12,-} (e_3-e_4)\right) \\
\hphantom{G^{(6)}_{\lambda} =}{}
-\frac{1}{4\sqrt{\lambda}} \left({\rm i}^{\frac32}\left(\lambda-\frac12\right)
\Psi^{(\frac12,-\frac12)}_{\lambda-\frac12,-} (-e_2-e_5) + {\rm i}^{\frac12}
\Psi^{(\frac12,\frac12)}_{\lambda-\frac12,-}(e_3-e_4)\right) + \frac{(i\lambda)^\frac{1}{2}}{\lambda^2-\frac14}
\Psi^{(0,0)}_{\lambda,-} e_1.
\end{gather*}
The functions $G^{(i)}$ are the physical conformal blocks.

\section{Conclusions and outlook}\label{section6}

Interplay of ideas from representation theory, integrable systems and conformal field theory has a~long history. Traditionally, these interrelations appeared mostly in the study of two-dimensional CFTs. Significant advances in this area of mathematics and mathematical physics were made by professor Tarasov and professor Varchenko, e.g., in their study of the Knizhnik--Zamolodchikov equations~\cite{Tarasov:1994yy, Tarasov:1993vs}.

Recent progress the conformal bootstrap programme provides strong motivation to try and extend the methods of representation theory to higher dimensions. We hope to have convinced the reader that the harmonic analysis approach outlined in this text holds some promise in this endeavour.

The present review was focused on supersymmetric theories. In such setups, there are still many open questions amenable to our methods. While in this work all constructions were illustrated on a comparatively simple example in one dimension, explicit computations of a similar type in four dimensions have been done in~\cite{N1D4_paper}.

Here we did not discuss at length the role played by integrability. Some aspects of it can be found in~\cite{Isachenkov:2017qgn}. On the other hand, it should be clear that the methods presented here extend to other contexts such as defect CFTs. Some investigations in this direction appeared already in~\cite{Isachenkov:2018pef}, but more work is in progress which seems to provide a non-trivial and highly structured modification of the theory described in the text. Finally, the spaces of $K$-spherical functions have been studied recently in~\cite{stokman2020npoint}, where the authors placed a particular emphasis on their representation through the Eisenstein integral. Investigation of analytic properties of these functions also seem to be a necessary requirement for analysis of crossing equations. We hope to turn to this question in the future.

\appendix

\section{Euclidean conformal group}\label{appendixA}

In this appendix we collect some details about the euclidean conformal group. Its subgroups, decompositions and representation that play a role in the main text are defined. For many more details, the reader is referred to \cite{Dobrev:1977qv}.

The group of conformal transformations of the euclidean space $\mathbb{R}^d$ is $G={\rm SO}(d+1,1)$. Its Lie algebra $\mathfrak{g}$ is spanned by generators of translations $P_\mu$, rotations $M_{\mu\nu}$, dilations $D$ and special conformal transformations $K_\mu$, obeying the following non-vanishing brackets
\begin{gather*}
 [D,P_\mu] = P_\mu, \quad [D,K_\mu] = - K_\mu, \qquad [D,M_{\mu\nu}] = 0,\\
 [M_\mu\nu,P_\rho] = \delta_{\nu\rho} P_\mu - \delta_{\mu\rho} P_\nu, \qquad [M_{\mu\nu},K_\rho] = \delta_{\nu\rho} K_\mu - \delta_{\mu\rho} K_\nu,\\
 [M_{\mu\nu},M_{\rho\sigma}] = \delta_{\nu\rho} M_{\mu\sigma} - \delta_{\mu\rho} M_{\nu\sigma} + \delta_{\nu\sigma} M_{\rho\mu} - \delta_{\mu\sigma} M_{\rho\nu},\\
 [K_\mu,P_\nu] = 2 (M_{\mu\nu} - \delta_{\mu\nu}D) .
\end{gather*}
The Lie algebra $\mathfrak{g}$ is graded with respect to the eigenvalue under the adjoint action of the dilation generator, called conformal weight. Thus we can write
\begin{equation*}
 \mathfrak{g} = \mathfrak{g}_{-1} \oplus \mathfrak{g}_0 \oplus \mathfrak{g}_1 = \operatorname{span}\{K_\mu\} \oplus \operatorname{span}\{D,M_{\mu\nu}\} \oplus \operatorname{span}\{P_\mu\}.
\end{equation*}
Subgroups of $G$ that correspond to subalgebras of degree zero, and non-positive degree play an important role in the representation theory of $G$. We denote them by
\begin{equation*}
 K = {\rm SO}(1,1) \times {\rm SO}(d), \quad P = ({\rm SO}(1,1) \times {\rm SO}(d)) \ltimes \mathbb{R}^d .
\end{equation*}
The later subgroup, generated by dilations, rotations and special conformal transformations is a parabolic subgroup of~$G$. Since it is the only parabolic subgroup that plays a role in our considerations, we shall call it the parabolic subgroup. The quotient of~$G$ by $P$ is the compactified euclidean space
\begin{equation*}
 G/P\cong S^d.
\end{equation*}

\subsection{Unitary irreducible representations}\label{appendixA.1}

All unitary irreducible representations of the conformal groups can be constructed from {\it elementary representations} by taking subrepresentations and quotients.

A finite dimensional irreducible representation of the group $K$ is specified by a conformal weight~$\Delta$ and the highest weight $\mu$ of the rotation group ${\rm SO}(d)$. Let $\rho_{\Delta,\mu}$ be a representation of the parabolic subgroup which extends this representation of $K$ by requiring that special conformal transformations act trivially. An elementary representation $\pi_{\Delta,\mu}=[\Delta,\mu]$ is a representation of~$G$ induced from $\rho_{\Delta,\mu}$
\begin{equation*}
 \pi_{\Delta,\mu} = \operatorname{Ind}_{P}^{G}\rho_{\Delta,\mu}.
\end{equation*}
More explicitly, the carrier space of $\pi$ is that of right-covariant functions
\begin{equation*}
 \Gamma_{\Delta,\mu} = \big\{ f \colon G \xrightarrow{} V\, |\, f(gp) = \rho_{\Delta,\mu}(p)^{-1} f(g)\big\} .
\end{equation*}
The space $V$ is the carrier space of the inducing representation $\rho$. The action on $\Gamma$ is given by multiplication of the argument of a functions $(g\cdot f)(g') = f(gg')$. The representation $\pi_{\Delta\mu}$ is said to be {\it of type I} if $\mu=(0,\dots,0,l)$ is a symmetric traceless tensor.

Elementary representations are generically irreducible but not unitary. Unitary representations belong to either principal, complementary or discrete series. The principal series have
\begin{equation*}
 \Delta \in \frac{d}{2}+ {\rm i}\mathbb{R},\qquad \mu\ \text{arbitrary} .
\end{equation*}
They are unitary with respect to the inner product
\begin{equation}
 (f_1,f_2) = \int_N {\rm d}x\, \langle \bar f_1,f_2\rangle, \label{p1}
\end{equation}
where integration is over a section of $P$-orbits. This is well-defined (independent of the choice of section) if and only if $\Delta + \bar\Delta = d$, which leads to the above restriction on the conformal dimension.

For $\Delta \notin d/2+{\rm i}\mathbb{R}$, the inner product~(\ref{p1}) is not well-defined. However, in some cases, there exist other invariant scalar products which make the elementary representations unitary. These representations are said to constitute the {\it complementary series}. For type one representations, we have
\begin{equation*}
 l=0,\qquad 0<\Delta<d,\qquad l>0,\qquad 1<\Delta<d-1.
\end{equation*}
Complementary series representations can be obtained by analytic continuation of discrete series of $\widetilde{{\rm SO}}(d,2)$.

Discrete series representations are defined by the condition that their matrix coefficients are square-integrable functions on the group. They are not elementary, but rather subquotients of elementary representations. As indicated by their name, discrete series representations have $\Delta = d/2 + n$, $n\in \mathbb{N}$. These representation only exist when $d$ is odd.

\section{Superconformal algebras of type I}\label{appendixB}

In this appendix we define what is meant by a superconformal algebra and introduce types I and II. While some of the discussion of the main text applies equally well to both types, the construction of Casimir equations relies on the algebra being of type I.

Let $\mathfrak{g} = \mathfrak{g}_{(0)} \oplus \mathfrak{g}_{(1)}$ be a finite-dimensional Lie superalgebra. We say that $\mathfrak{g}$ is a superconformal algebra if its even part $\mathfrak{g}_{(0)}$ contains the conformal Lie algebra $\mathfrak{so}(d+1,1)$ as a direct summand and the odd part $\mathfrak{g}_{(1)}$ decomposes as a direct sum of spinor representations of $\mathfrak{so}(d)\subset\mathfrak{so}(d+1,1)$ under the adjoint action.

If this is the case, we denote the dilation generator of the bosonic conformal Lie algebra by~$D$. Eigenvalues with respect to $\text{ad}_D$ give a decomposition of $\mathfrak{g}$ into the sum of eigenspaces
\begin{equation*}
\mathfrak{g} = \mathfrak{g}_{-1}\oplus\mathfrak{g}_{-1/2} \oplus \mathfrak{g}_0\oplus\mathfrak{g}_{1/2}\oplus\mathfrak{g}_{1} = \mathfrak{g}_{-1} \oplus \mathfrak{s} \oplus \mathfrak{k} \oplus \mathfrak{q}\oplus \mathfrak{g}_{1} .
\end{equation*}
The even part of $\mathfrak{g}$ is composed of $\mathfrak{g}_{\pm 1}$ and $\mathfrak{k}$ where $\mathfrak{g}_{-1}={\mathfrak{n}}$ contains the generators $K_\mu$ of special conformal
transformations while $\mathfrak{g}_{1} = \mathfrak{n}$ is spanned by translations $P_\mu$. Dilations, rotations and internal symmetries make up
\[ \mathfrak{k} = \mathfrak{so}(1,1) \oplus \mathfrak{so}(d) \oplus \mathfrak{u}.\] Generators of $\mathfrak{g}_{\pm1/2}$, are supertranslations~$Q_\alpha$ and super special conformal transformations $S_\alpha$. We shall also denote these summands as $\mathfrak{s}
= \mathfrak{g}_{-1/2}$ and $\mathfrak{q} = \mathfrak{g}_{1/2}$. All elements of non-positive degree make up a subalgebra $\mathfrak{p}$ of $\mathfrak{g}$ that will be referred to as the parabolic subalgebra
\begin{equation*}
 \mathfrak{p} = \mathfrak{g}_{-1} \oplus \mathfrak{g}_{-1/2} \oplus \mathfrak{g}_0 .
\end{equation*}
There is a unique (connected) corresponding subgroup $P\subset G$ such that $\mathfrak{p} = {\rm Lie}(P)$. The superspace can be identified with the supergroup of translations and supertranslations. It is defined as the homogeneous space $M = G/P$.

The above structure is present in any superconformal algebra. In this work, we shall mainly consider those $\mathfrak{g}$ which satisfy an additional condition of being of type~I. This means that the
odd subspace decomposes as a direct sum of two irreducible representations of $\mathfrak{g}_{(0)}$ under the adjoint action
\begin{equation*}
\mathfrak{g}_{(1)} = \mathfrak{g}_+ \oplus \mathfrak{g}_- .
\end{equation*}
The two modules $\mathfrak{g}_\pm$ are then necessarily dual to each other and further satisfy
\begin{equation*}
 \{\mathfrak{g}_\pm , \mathfrak{g}_\pm\} = 0.
\end{equation*}
In addition, the bosonic algebra assumes the form
\begin{equation}\label{U1R}
\mathfrak{g}_{(0)} = [\mathfrak{g}_{(0)},\mathfrak{g}_{(0)}] \oplus \mathfrak{u}(1) .
\end{equation}
The $\mathfrak{u}(1)$ summand is a part of the internal symmetry algebra. Its generator will be denoted by~$R$. All elements in $\mathfrak{g}_+$ possess the same $R$-charge. The same is true for the
elements of $\mathfrak{g}_-$, but the $R$-charge of these elements has the opposite value. Elements in the even subalgebra $\mathfrak{g}_{(0)}$, on the other hand, commute with~$R$.

Let us denote the intersections of the subspaces $\mathfrak{q}$ and $\mathfrak{s}$ with $\mathfrak{g}_\pm$ by
\begin{equation*} 
\mathfrak{q}_\pm = \mathfrak{q}\cap\mathfrak{g}_{\pm} ,\qquad \mathfrak{s}_\pm =\mathfrak{s}\cap\mathfrak{g}_\pm .
\end{equation*}
The subspaces $\mathfrak{q}_\pm$ and $\mathfrak{s}_\pm$ do not carry a representation of $\mathfrak{g}_{(0)}$, but they do carry a representation of~$\mathfrak{k}$. This also
means that in type I superconformal algebras, the action of~$\mathfrak{k}$ on super-translations decomposes into two or more irreducible representations. It turns out that
\begin{equation*}
 \dim (\mathfrak{q}_\pm) = \dim (\mathfrak{s}_\pm) = \dim (\mathfrak{g}_{(1)})/4 .
\end{equation*}
The full list of type I superconformal algebras, which follows directly from Kac's classification~\cite{Kac:1977em}, is
\begin{equation*}
 \mathfrak{sl}(2|\mathcal{N}),\quad \mathfrak{sl}(2|\mathcal{N}_1)\oplus\mathfrak{sl}(2|\mathcal{N}_2),\quad \mathfrak{psl}(2|2),\quad \mathfrak{osp}(2|4),\quad \mathfrak{sl}(4|\mathcal{N}),\quad \mathfrak{psl}(4|4) .
\end{equation*}
The presented list is that of complexified Lie superalgebras~-- for different spacetime signatures one considers their various real forms.

\section{Induced and coinduced representations}\label{appendixC}

In this appendix we collect some properties of the two types of representations that play a role in the main text, following Blattner~\cite{Blattner}. These representations are obtained by processes of induction and coinduction.

Given any algebra $\mathcal{A}$, a subalgebra $\mathcal{B}$ and a representation $\rho\colon \mathcal{B}\xrightarrow{}\text{End}(W)$ of $\mathcal{B}$, we can define two representations of $\mathcal{A}$ on the following spaces
\begin{equation*}
 \text{Ind}_{\mathcal{B}}^{\mathcal{A}}\rho = \mathcal{A} \otimes_{\mathcal{B}} W, \qquad \text{Coind}_{\mathcal{B}}^{\mathcal{A}}\rho = \text{Hom}_{\mathcal{B}}(\mathcal{A},W),
\end{equation*}
Elements of the first space are linear combinations of vectors $a\otimes w$, under identifications
\begin{equation*}
 ab \otimes w \sim a\otimes bw, \qquad a\in\mathcal{A},\qquad b\in\mathcal{B},\qquad w\in W,
\end{equation*}
and the action of $\mathcal{A}$ is the left regular one. In the second space, elements are $\mathcal{B}$-equivariant maps
\begin{equation*}
 \varphi\colon \ \mathcal{A}\xrightarrow{}W,\qquad \varphi(ba) = b\varphi(a),
\end{equation*}
and the action now is $(a\varphi)(a') = \varphi(a'a)$. The two modules introduced are called induced and coinduced modules, respectively. We defied them as left $\mathcal{A}$-modules. For an arbitrary algebra, induced and coinduced modules are formally related by duality. We shall now explain this relation in the context of representations of Lie groups and Lie algebras.

When studying representations of groups and Lie algebras, one can replace these algebraic objects by associative algebras that have the same representation theory. For groups, this is the group algebra (algebra of functions on the group under convolution) and for Lie algebras, it is the universal enveloping algebra. Thus, the above constructions give definitions of induction and coinduction for groups and algebras. For example, if~$G$ is any group, $H\subset G$ a~subgroup and~$\rho$ a~representation of~$H$ on the space~$W$, we put $\mathcal{A} = \mathbb{C}[G]$ and $\mathcal{B} = \mathbb{C}[H]$. Thus, the induced module of $\mathbb{C}[G]$ (and thereby $G$) is
\begin{equation*}
 \text{Ind}_H^G W = \mathbb{C}[G]\otimes_{\mathbb{C}[H]}W,
\end{equation*}
with the regular left action. Similar comment apply for coinduced representations. If $G$ is a~Lie group, one may equivalently view the above module as the space of covariant vector-valued functions on the group
\begin{equation*}
 \Gamma = \big\{ f \colon G\xrightarrow{}W\, |\, \varphi\big(gh^{-1}\big)=\rho(h)\varphi(g)\big\} ,
\end{equation*}
under the left-regular action $(g\cdot f)(x)=f\big(g^{-1}x\big)$. This view is more in line with our discussion in the main text.

There is a close relation between induced representations of Lie groups and coinduced representations of their Lie algebras that we shall now explain. Let $G$ be a Lie group, $H\leq G$ a Lie subgroup and $\mathfrak{g}={\rm Lie}(G)$, $\mathfrak{h}={\rm Lie}(H)$. Let $W$ be a finite dimensional representation of $H$ and use the same letter for the derivative representation of $\mathfrak{h}$. Then
\begin{equation}\label{ind-coind-1}
 d(\text{Ind}_H^G W) = \text{Coind}_{\mathfrak{h}}^{\mathfrak{g}} W = \text{Hom}_{U(\mathfrak{h})}(U(\mathfrak{g}),V).
\end{equation}
To see how this comes about, recall that the representation space on the right hand side consists of linear maps $U(\mathfrak{g})\xrightarrow{}W$ which commute with the action of $U(\mathfrak{h})$ on $U(\mathfrak{g})$ (by left multiplication) and $W$. The action of $x\in\mathfrak{g}$ on such a map is given by
\begin{equation*}
 (x\psi)(A) = \psi(A x),\qquad A\in U(\mathfrak{g}).
\end{equation*}
To see how~(\ref{ind-coind-1}) comes about consider an analytic function $f\colon G\xrightarrow{}W$. This function defines a linear map on the universal enveloping algebra through its Taylor coefficients
\begin{equation*}
 \psi\colon \ U(\mathfrak{g})\xrightarrow{}W, \qquad \psi(A) = \mathcal{R}_A f (e) .
\end{equation*}
Here $\mathfrak{R}_A$ is a differential operator corresponding to the element $A$ of the universal enveloping algebra, constructed out of right-invariant vector fields. Conversely, the knowledge of all Taylor coefficients can be used to recover~$f$. Covariance properties of $\psi$ follow from those of~$f$.

We mentioned that there is a formal relation of duality between induced and coinduced representation of arbitrary algebras. For Lie algebras, the duality takes a concrete form
\begin{equation*}
 \text{Coind}_\mathfrak{h}^\mathfrak{g}(W^*) = \big(\text{Ind}_\mathfrak{h}^\mathfrak{g}W\big)^* .
\end{equation*}
To see that this is true, let $V = \text{Ind}_\mathfrak{p}^\mathfrak{g} W $. Given $\psi\in V^*$ and $A\in U(\mathfrak{g})$ define the function
\begin{equation*}
 \psi\colon \ U(\mathfrak{g})\xrightarrow{} W^*,\qquad \psi(A)(w) = f(s(A)\otimes w) .
\end{equation*}
where $s$ is the antipode in $U(\mathfrak{g})$. It is clear that $\psi(A)$ is an element of $W^*$ and that $\psi$ is a~linear map. It also belongs to the coinduced module $\pi = \text{Coind}_\mathfrak{p}^\mathfrak{g} W^*$. This follows from the computation
\begin{equation*}
 \psi(BA)(w) = f (s(A)s(B) w) = \psi(A)(\sigma(B) w) = \big(B(\psi(A))\big)(w) .
\end{equation*}
Here, $B$ is an element of $U(\mathfrak{p})$. The last step uses the definition of the dual representation for the algebra $U(\mathfrak{p})$. The map $f\xrightarrow{}\psi$ is clearly linear. It also commutes with the action of~$U(\mathfrak{g})$. To see this, let $C\in U(\mathfrak{g})$. Then
\begin{align*}
 \widehat{(Cf)} (A) (w) & = (Cf)(\sigma(A)w) = f(\sigma(C)\sigma(A) w) = f(\sigma(AC) w)\\
 & = \psi(AC)(w) = (C\psi)(A)(w) .
\end{align*}
It is a simple matter to show that $f\xrightarrow{}\psi$ is a bijection. Therefore, the map establishes an isomorphism between the coinduced representation from~$W^\ast$ and the dual of the induced representation from~$W$.

In the context of conformal field theory, the states of the Hilbert space belong to representations induced from a parabolic subalgebra of the conformal Lie algebra. These representations are known as the parabolic Verma modules. Their dual modules form the algebraic principal series of representations. Algebraic principal series are naturally realised as coinduced representations. Their name steams the fact that the space of smooth vectors in a principal series representation of the conformal group~$G$ forms the algebraic principal series representation of the Lie algebra $\mathfrak{g}={\rm Lie}(G)$.

In the supersymmetric case, treating the induced representations of a superconformal group can be rather delicate. Therefore, for some purposes, including the analysis of tensor products of principal series representations, it is most convenient to work with the coinduced representations of the superconformal algebra, as was done in~\cite{Buric:2019rms}.

\section{Elements of supergeometry}\label{appendixD}

In this appendix we collect some properties of supermanifolds and Lie supergroups, following the classical work of Kostant \cite{Kostant:1975qe}. We hope these may be useful to some readers by offering a~way to put constructions of Sections~\ref{section3}--\ref{section5} on a firm mathematical basis.

Recall that, by definition, a supermanifold $M$ is a topological space $X$ together with a~sheaf~$A$ of superalgebras, such that around any point $x\in X$ there is an open neighbourhood $U$ with $A(U)\cong C^\infty(U)\otimes\Lambda_n$, where $\Lambda_n$ is the Grassmann algebra on $n$ generators. The number $n$ is called the odd dimension of~$M$. For any open set $V\subset X$, $A(V)$ is a commutative superalgebra. It is a non-trivial, but familiar, fact that the supermanifold can be completely recovered from its structure algebra~$A(X)$.

Some constructions regarding supermanifolds are more easily formulated in terms of a certain coalgebra $A(X)^\ast$ rather than the structure algebra itself. The $A(X)^\ast$ is defined as the space of all elements in the full dual $A(X)'$ which vanish on some ideal of finite codimension in $A(X)$. Elements of $A(X)^\ast$ are referred to as {\it distributions with finite support}. One observes that $A(X)^\ast$ is a supercocommutative coalgebra. Namely, let $i$ and $\Delta$ be the natural injection and the diagonal map
\begin{gather*}
 i \colon \ A(X)'\otimes A(X)'\xrightarrow{} (A(X)\otimes A(X))',\qquad i(v\otimes w)(f\otimes g) = (-1)^{|w| |f|} v(f) w(g),\\
 \Delta \colon \ A(X)'\xrightarrow{}(A(X)\otimes A(X))',\qquad (\Delta v)(f\otimes g) = v (f g),\qquad v,w\in A(X)',\ f,g\in A(X).
\end{gather*}
Then one can show $\Delta(A(X)^\ast)\subset A(X)^\ast\otimes A(X)^\ast$, so the diagonal map makes $A(X)^\ast$ into a~coalgebra. One again has that $A(X)^\ast$ determines the sheaf $A$. For example, $X$ as a set can be recovered either as the set of all homomorphisms $A(X)\rightarrow\mathbb{R}$, or as the set of all group-like elements in $A(X)^\ast$. The coalgebra $A(X)^\ast$ also plays a prominent role in the theory of Lie supergroups and their actions on supermanifolds, as will be outlined presently.

\subsection{Lie supergroups}\label{appendixD.1}

Let $\mathfrak{g}$ be a Lie superalgebra, $H$ a group and $\pi\colon H\xrightarrow{}\text{Aut}(U(\mathfrak{g}))$ a representation of $H$ by algebra automorphisms. Further, write $F(H)$ for the group algebra of $H$. The {\it smash product} $E(H,\mathfrak{g},\pi)$ is a supercocommutative Hopf algebra constructed as follows:
\begin{enumerate}\itemsep=0pt
\item[1)] As a vector space $E = F(H)\otimes U(\mathfrak{g})$.
\item[2)] The multiplication in $F(H)$ and $U(\mathfrak{g})$ is defined in the usual way and $h x h^{-1} = \pi(h) x$.
\item[3)] The comultiplication $\Delta$, counit $\eta$ and the antipode $\sigma$ are defined as
\begin{gather*}
 \Delta(h) = h\otimes h,\qquad \Delta(x) = 1\otimes x + x\otimes 1,\qquad \eta(h)=1,\qquad \eta(x)=0,\\
 \sigma(h) = h^{-1},\qquad \sigma(x) = - x,\qquad \sigma(A B) = (-1)^{|A||B|}\sigma(B)\sigma(A) .
\end{gather*}
\end{enumerate}
In these formulas $h\in H$, $x\in \mathfrak{g}$ and $A,B\in U(\mathfrak{g})$ are arbitrary. The set of group-like elements of $E$ is precisely $H$ and that of primitive elements is $\mathfrak{g}$. Here $\mathfrak{g}$ is identified with a subspace of $U(\mathfrak{g})$ in the obvious way. Conversely, given a supercocommutative Hopf algebra $E$ with the group of group-like elements $H$ and the Lie superalgebra of primitive elements $\mathfrak{g}$ one can show that a representation $\pi$ exists such that $E=E(H,\mathfrak{g},\pi)$. Now assume that $\mathfrak{g} = \mathfrak{g}_{\bO}\oplus\mathfrak{g}_{\b1}$ is a Lie superalgebra and $G_0$ the connected, simply connected Lie group whose Lie algebra is $\mathfrak{g}_{\bO}$. Then there is a unique representation $\pi$ on $\mathfrak{g}$ by Lie superalgebra automorphisms which reduces to the adjoint representation on $\mathfrak{g}_{\bO}$. The smash product $E(G_0,\mathfrak{g},\pi)$ is called the simply-connected Lie--Hopf algebra associated with $\mathfrak{g}$ and denoted by $E(\mathfrak{g})$.

A supermanifold $(X,A)$ is said to be a Lie supergroup if the coalgebra $A(X)^\ast$ is a Hopf algebra. By the above remarks, in this case $A(X)^\ast$ is a smash product $E(G_{0},\mathfrak{g},\pi)$ with $X=G_0$. In fact, if $X$ is simply connected, it can be shown that $A(X)^\ast = E(\mathfrak{g})$ for some Lie superalgebra, called the Lie superalgebra of $(X,A)$.

\subsection{Supergroup actions}\label{appendixD.2}

Assume now that $G=(G_0,A)$ is a Lie supergroup and $M=(Y,B)$ another supermanifold. We will say that $G$ acts on $M$ if there is a map $A(G_0)^\ast \otimes B(Y)^\ast \xrightarrow{} B(Y)^\ast$, $u\otimes w\mapsto u\cdot w$, which satisfies
\begin{equation*}
 \Delta u = \sum_i u_i'\otimes u_i'',\qquad \Delta w = \sum_j w_j'\otimes w_j''\implies \Delta(u\cdot w) = \sum_{i,j} (-1)^{|u_i''||w_j'|} u_i'\cdot w_j'\otimes u_i''\cdot w_j''.
\end{equation*}
In this case, the structure algebra $B(Y)$ is a $A(G_0)^\ast$-module through
\begin{equation*}
 \pi \colon \ A(G_0)^\ast\xrightarrow{}\text{End}(B(Y)),\qquad \langle w, \pi(u) f\rangle = (-1)^{|u| |w|}\langle \sigma(u)\cdot w, f\rangle .
\end{equation*}
The later is called the coaction representation of $G$. The action of $G$ is fully determined by the corresponding coaction representation. Bearing in mind that $A(G_0)^\ast = E(\mathfrak{g})$, we see that a Lie supergroup action can be though of as a pair of representations of the underlying group~$G_0$ and of Lie superalgebra~$\mathfrak{g}$ on the vector space $B(Y)$, which satisfy a compatibility condition.

Dually, there is a map $\varphi\colon B(Y)\xrightarrow{}B(Y)\otimes A(G_0)$ that makes $B(Y)$ into a comodule-algebra of $A(G_0)$. This means that $\varphi$ is a morphism of algebras which is compatible with the Hopf algebra structure of $A(G_0)$. For example, $\varphi$ satisfies
\begin{equation*}
 (1\otimes\Delta)\circ\varphi = (\varphi\otimes1)\circ\varphi\colon \ B(Y)\xrightarrow{}B(Y)\otimes A(G_0)\otimes A(G_0),
\end{equation*}
along with a number of other compatibility conditions, see, e.g., \cite{madore_1999}. Let $p$ be a point in $G_0$, considered as a morphism $p\colon A(G_0)\xrightarrow{}\mathbb{R}$. Then one can form the map $(1\otimes p)\circ\varphi\colon B(Y)\xrightarrow{}B(Y)$. For obvious reasons, we refer to such compositions with~$p$ as {\it evaluations}. Running over all points~$p$, we get a representation of the~$G_0$ on $B(Y)$. This agrees with the coaction representation~$\pi$ from above. The evaluated action of the bosonic group~$G_0$ and the infinitesimal action of the conformal Lie superalgebra fit together to form a representation of the Lie--Hopf algebra $A(G_0)^\ast$ on $B(Y) = \mathcal{M}$.

\subsection*{Acknowledgements} This work reviewed results obtained in collaboration with Volker Schomerus and Zhenya Sobko, some of which originated in the work with Misha Isachenkov. I warmly thank all my collaborators. I am also indebted to Maja Buri\'c and Aleix Gimenez-Grau for a number of illuminating discussions and to anonymous referees for suggesting a number of improvements of the text. Finally, I acknowledge the support by the Deutsche Forschungsgemeinschaft (DFG, German Research Foundation) under Germany’s
Excellence Strategy~-- EXC 2121 ``Quantum Universe''~-- 390833306.

\pdfbookmark[1]{References}{ref}
\LastPageEnding

\end{document}